\definecolor{darkred}{rgb}{0.5,0,0}
\titleformat*{\paragraph}{\sc}
\newcolumntype{C}[1]{>{\centering\arraybackslash}p{#1}}
\theoremstyle{plain}
\crefname{asn}{Assumption}{Assumptions}
\newtheorem{lem}{Lemma}
\crefname{lem}{Lemma}{Lemmas}
\newtheorem{prop}{Proposition}
\crefname{prop}{Proposition}{Propositions}
\crefname{cor}{Corollary}{Corollaries}
\theoremstyle{definition}
\newtheorem{exm}{Example}
\newcommand\independent{\protect\mathpalette{\protect\independenT}{\perp}}
\def\independenT#1#2{\mathrel{\rlap{$#1#2$}\mkern2mu{#1#2}}}
\DeclarePairedDelimiter\abs{\lvert}{\rvert} 
\DeclarePairedDelimiter\indicatorfence{\{}{\}}
\newcommand\1{\operatorname{\mathbbm{1}}\indicatorfence}
\DeclareMathOperator*{\argmin}{argmin}
\DeclareMathOperator*{\var}{Var}
\DeclareMathOperator*{\cov}{Cov}
\renewcommand{\vec}[1]{\boldsymbol{\mathbf{#1}}}
\newcommand{\Ra}[0]{R}
\crefname{sappsec}{Supplemental Appendix}{Supplemental Appendices}
\crefname{sappsubsec}{Supplemental Appendix}{Supplemental Appendices}
\crefname{sappsubsubsec}{Supplemental Appendix}{Supplemental Appendices}
\crefname{appsec}{Appendix}{Appendices}
\begin{document}

\title{\texorpdfstring{\vspace{-1.5\baselineskip}}{} Dynamic Causal Effects in a
  Nonlinear World: \texorpdfstring{\\}{}the Good, the Bad, and the
  Ugly\thanks{Email: {\tt mkolesar@princeton.edu}, {\tt mikkelpm@princeton.edu}.
    We are grateful for comments from %
    Joshua Angrist, %
    Paul Bousquet, %
    Alessandro Casini, %
    Runtong Ding, %
    Florian Gunsilius, %
    Peter Hull, %
    Arthur Lewbel, %
    Daniel Lewis, %
    Dexin Li, %
    Adam McCloskey, %
    Geert Mesters, %
    Aureo de Paula, %
    Julius Sch\"{a}per, %
    Christian Wolf, %
    participants at the JBES session at the 2025 ASSA meetings
    and the Federal Reserve Bank of New York seminar, %
    two anonymous referees, the editor Atsushi Inoue, and particularly from our discussants: S\'{\i}lvia Gonçalves, %
    Ed Herbst, %
    Ana Mar\'\i a Herrera, %
    Ben Johannsen, %
    Òscar Jordà, %
    Toru Kitagawa, %
    Elena Pesavento, %
    Weining Wang, %
    and Mengshan Xu. Eric Qian and Nelson Matthew Tan provided excellent research assistance.
    Koles\'{a}r acknowledges support by the National Science Foundation under
    Grant SES-2049356. Plagborg-M{\o}ller acknowledges support from the National
    Science Foundation under Grant SES-2238049
    and from the Alfred P.\ Sloan Foundation.}} \author{Michal Koles\'{a}r \\
  Princeton University \and Mikkel Plagborg-M{\o}ller \\ Princeton University}
\date{\texorpdfstring{\bigskip}{ }\today}
\maketitle

\vspace{-1.5em}
\begin{abstract}
Applied macroeconomists frequently use impulse response estimators motivated by linear models. We study whether the estimands of such procedures have a causal interpretation when the data generating process is in fact nonlinear. We show that vector autoregressions and linear local projections onto observed shocks or proxies identify weighted averages of causal effects regardless of the extent of nonlinearities. By contrast, identification approaches that exploit heteroskedasticity or non-Gaussianity of latent shocks are highly sensitive to departures from linearity. Our analysis is based on new results on the identification of marginal treatment effects through weighted regressions, which may also be of interest to researchers outside macroeconomics.
\end{abstract}
\emph{Keywords:} dynamic treatment effect, impulse response, local projection,
semiparametric identification, structural vector autoregression.

\section{Introduction}\label{sec:intro}

Impulse response functions are key objects in macroeconomic analysis. Since they
measure dynamic causal effects of surprise changes in policy or fundamentals on
subsequent macroeconomic outcomes, they provide calibration targets for
structural modeling and help validate model predictions. They also inform
optimal economic policy questions, both directly and indirectly
\citep{Christiano1999,McKay2023}.

Applied researchers typically report impulse response estimators motivated by
linear time series models, such as \acp{VAR} or local projections. Although
there exists a wealth of nonlinear alternatives (\citealp{Fan2003};
\citealp{Herbst2016}; \citealp[Chapter 18]{Kilian2017}), linear methods are
attractive due to their simplicity and the difficulty of clearly detecting
nonlinear relationships in typical macroeconomic data. At the same time, both
macroeconomic theorists and policymakers think nonlinearities are important:
structural models with essential nonlinearities have become dominant in recent
decades, and many economic policy debates concern state-dependence and
asymmetries. How can we justify using linear methods if we think the world is a
nonlinear place?

This paper studies the causal interpretation of impulse response estimators
based on linear models when the data is generated by an essentially unrestricted
nonparametric structural model. We first deliver \emph{good} news for linear
local projection or \ac{VAR} estimators that project directly on an observed
shock or proxy: their estimand (i.e., probability limit) equals a weighted
average of the true nonlinear causal effects, regardless of the extent of
nonlinearities in the \ac{DGP}. By contrast, the news is \emph{bad} or even
\emph{ugly} for estimators that identify latent shocks via heteroskedasticity or
non-Gaussianity: they generally do not estimate a meaningful causal summary
under departures from linearity. Thus, the hard work needed to directly measure
shocks (or proxies) using historical or institutional data buys insurance
against nonlinearities that other identification approaches lack.

Our good news are based on an extension of the results in \citet{Yitzhaki1996}
and \citet{Rambachan2021}: impulse response estimands from linear local
projections and \acp{VAR} that project on observed shocks or proxies correspond
to positively-weighted averages of marginal effects---causal effects of changing the shock variable from $x$ to $x+\delta$ for infinitesimally small $\delta$, averaged out over all (past, present, and future) shocks other
than the contemporaneous shock of interest, and weighted over different baseline
values $x$ for the shock variable. Thus, these estimands provide a scalar causal summary of
the full richness of the nonlinear causal effects; the positive weights ensure
that the researcher gets the sign right if the true marginal effects are
uniformly positive or negative. Our assumptions drop restrictions imposed in the
existing literature that ruled out models with kinks or discontinuous
regime-switches or shocks with unbounded support.

In a nonlinear \ac{DGP}, both the sign and the magnitude of the causal effects
can depend on the baseline shock value (e.g., whether it is positive or
negative), so how these values are weighted can matter a lot. Fortunately, as we
illustrate using several empirical examples, the weight function used by local
projections and \acp{VAR} is straightforward to estimate and report. In many
applications, the researcher does not directly observe the shock of interest but
only a proxy, also known as an external instrument \citep{Stock2018}. In this
case, we show that an easily-interpretable monotonicity condition is required to
guarantee a positive weight function. However, we also show that when control variables are needed to isolate a true shock (i.e., recursive or
Cholesky identification), then positive weights can only be guaranteed if the linearly residualized shock is \emph{nonlinearly} unpredictable by the controls, which may be a strong assumption in the absence of detailed institutional knowledge and high-quality data.

One implication of these results is that linearity-based estimators are useful
even when economic theory predicts a nonlinear relationship between the shock
and the outcome of interest. For example, if the outcome variable has limited
support, such as when it is binary or censored (say, due to a zero lower bound),
nonlinearities are inherently present. If one is interested in
\emph{characterizing} the nonlinearities, then it makes sense to model them, and
it is of course always a good idea to plot the raw data regardless. However, if
one is interested in an overall summary of \emph{marginal effects}, then linear
local projections and \acp{VAR} are theoretically coherent estimators, as
discussed earlier. In fact, we show that directly modeling nonlinearities can be
counterproductive unless the researcher is confident in their modeling: under
functional form misspecification, local projections with higher-order terms
still estimate a weighted average of marginal effects, but some of the weights
may be negative, which risks getting the sign of the causal effects wrong. This
echoes the message from an earlier JBES lecture by \citet{angrist_dummy_2001}
that in a cross-section context with limited dependent variables, linear methods
provide more robust estimates of treatment effects than nonlinear ones.

When there is a dearth of direct shock measures or proxies, applied researchers
frequently resort to identification via heteroskedasticity
\citep{Sentana2001,Rigobon2003,Lewbel2012}. Unfortunately, we show that these
estimation approaches are sensitive to the assumption that the structural model
is linear: the estimand can easily be nonzero when there is no causal effect, or
negative when the true shock has a uniformly positive effect on the outcome of
interest. Fixing these issues while still delivering informative inference
appears difficult, since a natural nonparametric generalization of the
identification strategy yields very wide identified sets. The intuition for
these negative results is that the identification exploits a source of exogenous
variation that shifts the \emph{scale} of the latent shock of interest but not
its mean. Without strong functional form assumptions, this type of exogenous
variation is uninformative about the effect of a \emph{location} shift in the
shock on the conditional mean of the outcome, i.e., the impulse response.
However, a silver lining is that the linear model delivers testable restrictions.

The sensitivity to nonlinearity is even greater for identification via non-Gaussianity \citep{Comon1994,Gourieroux2017,Lanne2017}. Also known as \ac{ICA}, this identification approach has recently increased significantly in popularity in the \ac{VAR} literature. We show that the nonparametric analogue of the identification assumptions yields an identified set so large that effectively any function of the data can be construed as a ``shock''. Intuitively, the mere assumptions that the latent shocks are independent and non-Gaussian are vacuous in a nonparametric context: any collection of random variables can always be represented as some nonlinear function of independent uniformly distributed random variables. Moreover, we give examples of simple \acp{DGP} featuring slight nonlinearity for which \emph{any} linearity-based \ac{ICA} procedure is highly biased asymptotically, yet in these \acp{DGP} one cannot reject the validity of the linear model.

The building block underlying most of the above findings is a set of results on
the identification of weighted averages of marginal treatment effects using
weighted regressions, which connects our analysis to a large literature in
microeconometrics \citep[e.g.,][]{Yitzhaki1996,NeSt93,Angrist1999,GoldsmithPinkham2024}.
We extend existing results in this literature by unifying the treatment of
continuous, discrete, and mixed regressors, and by substantially weakening the
regularity conditions: we allow for regressors with unbounded support, impose
minimal regularity on the regression function, and our moment conditions
essentially only require the existence of the probability limit of the
regression estimator.

An important limitation of our results is that they only concern identification.
While we are motivated by the observation that full-fledged nonparametric
estimation is challenging in realistic macroeconomic data sets, we do not
explicitly analyze the precision or small-sample bias of the estimators we
study. We refer to \citet{Herbst2024} for a discussion of finite-sample biases
of local projections and \acp{VAR} in linear models. Another limitation is that we do not consider identification via non-recursive short-run restrictions, long-run restrictions, or sign restrictions.

\paragraph{Literature.}
Pioneering work on semiparametric causal time series analysis includes
\citet{Gallant1993}, \citet{Potter2000}, \citet{White2006}, \citet{White2009}, \citet{Angrist2011},
and \citet{Angrist2018}, see also \citet{Goncalves2021,goncalves24nonparametric,Goncalves2024},
\citet{Gourieroux2023}, and \citet{Kitagawa2023} for recent contributions. Our
result on the causal interpretation of local projections with observed shocks is
very closely related to \citet{Rambachan2021} and subsequent work by
\citet{Caravello2024} and \citet{Casini2024}.

As for identification via heteroskedasticity or non-Gaussianity, we are not
aware of other work in a nonparametric vein. \Citet{MontielOlea2022} criticize
linearity-based versions of these identification strategies for being seemingly
sensitive to functional form assumptions, and potentially being subject to weak
identification. The present analysis quantifies this sensitivity more precisely
by deriving both the identified sets for the nonparametric analogues of these
identification assumptions, and the estimands of linearity-based procedures.

\paragraph{Outline.}
\Cref{sec:framework} defines a nonparametric framework for identification of
dynamic causal effects. \Cref{sec:shock} argues that local projection and \ac{VAR} estimands
based on observed shocks or proxies have a robust causal interpretation
regardless of the extent of nonlinearities. \Cref{sec:heterosk,sec:nongauss}
show, on the other hand, that estimands based on identification through
heteroskedasticity or non-Gaussianity are sensitive to the assumption that the
structural function is linear. \Cref{sec:mte} provides the theoretical basis for
the results in the earlier parts of the paper by extending results from the
microeconometric literature on the interpretation of regression estimators as
weighted marginal treatment effects; this section may be of independent interest
for readers outside macroeconomics. \Cref{sec:concl} concludes. Technical
details and proofs are relegated to the online supplement.

\section{Nonparametric framework for dynamic causality}\label{sec:framework}

In this section we set up a nonparametric framework for dynamic causal
identification.

\subsection{Model}

We are interested in the dynamic response of a scalar outcome variable $Y_{t}$
to an impulse in the scalar shock variable $X_t$. As a leading example, one may
think of $X_t$ as a variable controlled by a policy-maker, such as a surprise
change in the policy interest rate set by the central bank. For ease of
exposition, we restrict attention to continuously distributed shocks $X_t$ for
now, but \Cref{sec:mte} shows that our results generalize to handle continuous,
discrete, or mixed distributions in a unified manner.

The outcome variable is determined by an underlying dynamic structural model.
Our causal framework doesn't restrict this model; we only assume that when
evaluated $h$ periods after the realization of the shock $X_{t}$, the outcome
admits the nonparametric structural representation
\begin{equation} \label{eqn:causal}
Y_{t+h} = \psi_h(X_t, \vec{U}_{h, t+h}) \quad \text{for all } t, h \geq 0.
\end{equation}
For each horizon $h$, $\psi_{h}(\cdot, \cdot)$ is an unknown measurable function
that we call the \emph{structural function}, while $\vec{U}_{h, t+h}$ is a
vector of all variables (dated before, on, and after time $t$) that causally
affect $Y_{t+h}$, other than $X_t$. Without restrictions on
$\vec{U}_{h, t+h}$ or the structural function, the
representation~\eqref{eqn:causal} is without loss of generality. In typical
recursive time-series models, however, $\vec{U}_{h, t+h}$ will contain the vector $\vec{Y}_{t-1}$ of observed data at time
$t-1$ as well as shocks dated $t, t+1,\dotsc, t+h$, but exclude $X_t$ and shocks
dated after $t+h$ \citep{White2006,White2009,Caravello2024,Goncalves2024}. A
leading special case is the linear structural \ac{VAR} model, which additionally
implies that $\psi_{h}$ is linear in both $X_{t}$ and $\vec{U}_{h, t+h}$
\citep[e.g.,][Chapter 4.1]{Kilian2017}.

We assume throughout that it is meaningful to think of varying $X_t$ while
keeping $\vec{U}_{h, t+h}$ constant, so that the random function
$Y_{t+h}(x)\equiv \psi_h(x, \vec{U}_{h, t+h})$ defines a \emph{potential outcome
  function} at horizon $h$. \Citet{Angrist2011}, \citet{Angrist2018}, and
\citet{Rambachan2021} work with this potential outcome notation
$Y_{t+h}(x)$, and keep all other past, present, and future shocks
(captured by $\vec{U}_{h, t+h}$ in our model) implicit. Our structural function
framework~\eqref{eqn:causal} is mathematically equivalent, but facilitates comparisons with the
linear structural \ac{VAR} literature.

For now, we focus on the case where $X_{t}$ is a
``shock'' (such as a surprise change in a policy
instrument), and assume that it is independent of $\vec{U}_{h, t+h}$:
\begin{equation}\label{eqn:indep}
X_t \independent \vec{U}_{h, t+h}.
\end{equation}
This assumption is common in the literature, and essentially just normalizes the
structural function $\psi_{h}$, so that its first argument captures the total
causal effect of the shock $X_t$ on $Y_{t+h}$, including its direct effect and
any indirect effects, both contemporaneous and dynamic. This is illustrated in
the following simple example.

\begin{exm}\label{exm:ar1_regime}
Consider a univariate AR(1) model with endogenous regime switching:
\begin{equation*}
  Y_t = \rho_t Y_{t-1} + \tau \varepsilon_t + \nu_t,
\end{equation*}
with regime-dependent parameter $\rho_t = \rho_1 S_t + \rho_0(1-S_t)$ and binary
regime $S_t = \1{\varepsilon_{t-1}+\xi_{t-1}\leq 0}$, and where
$\rho_0,\rho_1,\tau$ are constants. Assume that $\varepsilon_t$, $\nu_t$, and
$\xi_t$ are i.i.d.\ and mutually independent, and that we observe the shock
$X_t=\varepsilon_t$. We can cast this model into the form required by
\cref{eqn:causal,eqn:indep} as follows. Define
$\vec{U}_{h, t+h} \equiv (Y_{t-1},S_t, \nu_t, \dotsc, \nu_{t+h}, \xi_t, \dotsc,
\allowbreak\xi_{t+h-1}, \varepsilon_{t+1}, \dotsc, \varepsilon_{t+h})'$ and
$\rho(\vartheta) \equiv \rho_1 \1{\vartheta \leq 0} + \rho_0\1{\vartheta>0}$ for
$\vartheta \in \mathbb{R}$. Then, for all $h \geq 1$,
\begin{align*}
  \psi_h(x, \vec{u}) &= \big\lbrace y_{-1}(\rho_1 s+ \rho_0(1-s)) + (\tau x+\nu)\big\rbrace \rho(x+\xi)\prod_{\ell=1}^{h-1} \rho(\varepsilon_{+\ell}+\xi_{+\ell}) \\
                     &\quad + \sum_{\ell=1}^h (\tau\varepsilon_{+\ell}+\nu_{+\ell})\prod_{b=\ell}^{h-1} \rho(\varepsilon_{+b}+\xi_{+b}),
\end{align*}
where we have partitioned
$\vec{u}=(y_{-1}, s, \nu, \nu_{+1}, \dotsc, \nu_{+h}, \xi, \xi_{+1}, \dotsc,
\xi_{+(h-1)}, \varepsilon_{+1}, \dotsc, \varepsilon_{+h})'$. Notice that the
function $\psi_h$ captures the full dynamic effect of the shock variable
$X_t=\varepsilon_t$: both the direct impact effect of $\varepsilon_t$ on $Y_t$
(which feeds forward to future periods), and the indirect, nonlinear effect of
$\varepsilon_t$ on the next-period regime $S_{t+1}$ (which also feeds forward).
\end{exm}

When $X_{t}$ is not a shock but corresponds to a policy instrument (such as the policy interest
rate) or any other kind of serially correlated variable, it will typically be correlated with $\vec{U}_{h, t+h}$, both
because of the serial correlation and because it may be correlated with
other determinants of $Y_{t+h}$---we discuss this case
in~\Cref{sec:ident-with-contr}.

\subsection{Causal effects}
A familiar issue in nonlinear models is that there are multiple possible
definitions of an impulse response, i.e., a dynamic causal effect. In a linear
model, the effect of exogenously changing $X_{t}$ from $x_{0}$ to $x_{1}$ is a
linear function of the difference $x_{1}-x_{0}$: it equals
$\psi_h(x_1,\vec{U}_{h, t+h})-\psi_h(x_0,\vec{U}_{h,
  t+h})=\theta_h(x_{1}-x_{0})$ for some constant scalar $\theta_{h}$. By
contrast, in a nonlinear model, the effect is in general a nonlinear function
of both $x_{1}$ and $x_{0}$ (not just their difference), and it also depends on
the past history and the current and future nuisance shocks via
$\vec{U}_{h, t+h}$.

In theoretical macroeconomic modeling, researchers often report the impulse
responses with respect to a so-called ``MIT shock'', which starts the economy at
steady state, then hits the economy with a one-off impulse to $X_t$, and
subsequently sets all other current and future shocks to zero:
$\psi_h(X_{t}, \vec{0})-\psi_h(0,\vec{0})$, where we normalize the steady-state
values of $X_{t}$ and $\vec{U}_{h, t+h}$ to zero. While computationally
convenient, this impulse response concept has no empirical counterpart; moreover, for the
measure to be policy relevant, it is necessary that the model satisfy certainty
equivalence.

We shall instead focus on impulse responses (causal effects) defined as expected counterfactual changes in the outcome of interest, averaging out over all other shocks. Specifically, define the \emph{average structural function}
\begin{equation*}
  \Psi_h(x) \equiv E[\psi_h(x, \vec{U}_{h, t+h})], \quad x\in\mathbb{R},
\end{equation*}
which corresponds to the expected potential outcome function. Here the
expectation is taken over the marginal distribution of $\vec{U}_{h, t+h}$. The
expectation is implicitly assumed to exist for all $x$. The average structural
function measures the counterfactual average value of the future outcome
$Y_{t+h}$ that we would observe if the policy-maker engineered a particular
fixed value $x$ for the policy variable at time $t$, averaging out over the
randomness caused by all other factors that influence the outcome independently
of the policy decision at time $t$. Even though in some nonlinear models the
structural function $\psi_h$ is discontinuous in the policy variable $x$, the
average structural function $\Psi_h$ will typically be a smoother function of
$x$, as it averages out over the realizations of other shocks. For example, this
is the case in the regime-switching model in \Cref{exm:ar1_regime} if $\xi_t$ is
continuously distributed. A certain amount of smoothness in $\Psi_h$ will be
important for the identification of causal effects, as we discuss below.

Typical data samples in macroeconomics are too small to permit accurate
nonparametric estimation of the entire average structural function
$x \mapsto \Psi_h(x)$. A pragmatic alternative is to target weighted averages of
the structural function---average causal effects---or its derivatives---average
marginal effects \citep{Rambachan2021,Goncalves2021,Goncalves2024}. This paper
focuses on estimation of \emph{average marginal effects}
\begin{equation}\label{eqn:theta}
\theta_h(\omega)\equiv\int \omega(x)\Psi_h'(x)\, dx,
\end{equation}
where $\omega(\cdot)$ is a weight function averaging across the baseline values
of the shock variable $X_{t}$. We reserve the term \emph{average marginal
  effect} to weight functions that are convex, i.e., $\omega(x)$ is nonnegative
for all $x$ and integrates to one, $\int \omega(x)\, dx=1$. This ensures that
$\theta_{h}(\omega)$ is a meaningful causal summary of the average structural
function $\Psi_{h}(x)$ in that it prevents what \citet{strlb17} call a
sign-reversal: if $\Psi_{h}'(x)$ has the same sign for all $x$ ($+,0$ or $-$),
then $\theta_{h}(x)$ will also have this sign.\footnote{\citet{blandhol2022tsls}
  call estimands with convex $\omega$ ``weakly causal''. Convex weighting
  schemes also satisfy what \citet{rslr07} call ``boundedness'':
  $\theta_{h}(\omega)$ lies in the support of $\Psi_{h}'(x)$.} This property is
particularly useful when qualitatively validating predictions of structural
macroeconomic models.

Depending on the form of the weight function,
$\theta_{h}(\omega)$ has two interpretations in terms of an \emph{average causal
  effect} of a shock with magnitude $\delta>0$,
\begin{equation} \label{eqn:avg_causal_effect}
\theta_h(\delta, \omega_{0}) \equiv \frac{1}{\delta}\int \omega_{0}(x) \lbrace \Psi_h(x+\delta)-\Psi_h(x) \rbrace \, dx.
\end{equation}
First, the average marginal effect corresponds to the average causal effect for
infinitesimally small shocks: $\theta_{h}(\omega)=\lim_{\delta\to 0}\theta_{h}(\delta, \omega)$, provided we can pass the limit as $\delta\to 0$
under the integral sign in~\eqref{eqn:avg_causal_effect}. Second, if the
weighting in~\eqref{eqn:theta} admits the integral representation
$\omega(x)=\frac{1}{\delta}\int_{x-\delta}^{x}\omega_{0}(x)\, dx$, substituting
$\Psi_h(x+\delta)-\Psi_h(x)=\int_{x}^{x+\delta}\Psi'(\chi)\, d\chi$
into~\eqref{eqn:avg_causal_effect} and changing the order of integration yields
$\theta_{h}(\omega)=\theta_{h}(\delta, \omega_{0})$. For this reason, focusing on
average marginal effects is without loss of generality.

In a linear model, the weighting does not matter, since $\Psi_{h}'(x)$ does not
depend on $x$. But in nonlinear models, it could matter greatly whether we
attach most weight to positive or negative shocks, or to shocks with small or
large magnitude. Therefore, accounting for the form of the weighting $\omega$ is
important when using estimates of $\theta_{h}(\omega)$ to calibrate or validate
structural macroeconomic models. In the next section, we discuss identification
approaches that deliver weighted averages of marginal effects under a particular
weighting scheme that depends on the shock distribution. In \Cref{sec:mte}, we
discuss estimation approaches that target any pre-specified weighting scheme.

\section{The good: observed shocks and proxies}\label{sec:shock}

If the researcher directly observes the shock of interest, or at least a valid
proxy for it, then there is \emph{good} news: conventional local projections or
structural \ac{VAR} impulse responses estimate average marginal effects with an
interpretable weighting scheme, regardless of how nonlinear the underlying
\ac{DGP} is. Moreover, the weights can be estimated from the data, and we give
several empirical examples of how to interpret them. In contrast to
linear estimators, we demonstrate using a simple example that nonlinear
extensions of local projections or \acp{VAR} do not generally provide
meaningful causal summaries under misspecification. Finally, we extend the
analysis to shocks that are recursively identified, i.e., by controlling for
covariates.

\subsection{Identification with observed shocks}\label{sec:obsshock}

We start off by assuming that the researcher directly observes (or consistently estimates) the shock $X_t$ of interest. This would be the case, for example, if the shock is identified through a ``narrative approach''. See \citet{Ramey2016} for several empirical examples.

Under the nonlinear structural model~\eqref{eqn:causal} and the shock
independence assumption~\eqref{eqn:indep}, the conditional expectation of the
outcome given the shock,
\begin{equation} \label{eqn:cond_expn}
g_h(x) \equiv E[Y_{t+h} \mid X_t=x],
\end{equation}
nonparametrically identifies the average structural function:
\begin{equation} \label{eqn:obsshock_identif}
\Psi_h(x) = E[\psi_h(x, \vec{U}_{h, t+h})] = E[\psi_h(x, \vec{U}_{h, t+h}) \mid X_t=x] = g_h(x).
\end{equation}
Hence, in principle, we could estimate any weighted causal effect of interest by
running a nonparametric regression of $Y_{t+h}$ on $X_t$ to obtain $g_h(\cdot)$
in the first step, and then averaging this function according to the desired
weighting scheme in the second step, as suggested by \citet{goncalves24nonparametric,Goncalves2024}. In \Cref{sec:mte}, we discuss a complementary strategy that
identifies the same estimand via weighted averages of the observed outcomes, and
how both strategies can be combined. However, as discussed in more detail in
\Cref{sec:mte}, these strategies may yield noisy and sensitive estimates in the
relatively small samples available in macroeconomics.

\paragraph{Interpretation of linear projection estimates.}
We take a cue from \citet{Rambachan2021} and instead aim for a less ambitious
goal. Rather than targeting a pre-specified weighted average of
causal or marginal effects, we focus on simple local projection
and \ac{VAR} estimators, which are relatively precise even with small sample sizes. We
demonstrate that these simple estimators have an attractive robustness property:
even though they are motivated by a linear model, when the \ac{DGP} is nonlinear,
their estimand can still be interpreted as an average marginal effect with a
particular weight function.

The local projection estimator of \citet{Jorda2005} estimates the impulse
response of $Y_t$ with respect to $X_t$ at horizon $h$ as the coefficient
$\hat{\beta}_h$ in the \ac{OLS} regression
\begin{equation} \label{eqn:lp_reg}
Y_{t+h} = \hat{\beta}_h X_t + \hat{\gamma}_h'\vec{W}_t + \text{residual}_{h, t+h},
\end{equation}
where $\vec{W}_t$ is a vector of control variables (typically including a
constant and lagged outcomes and shocks). For now, we will assume that the shock
$X_t$ is in fact a ``shock'', so that it is linearly unpredictable using the
controls: $\cov(X_t, \vec{W}_t)=0$. Then the set of controls
$\vec{W}_t$ affects only the precision of $\hat{\beta}_{h}$, but not its
probability limit. In particular, under standard stationarity and ergodicity
assumptions, the local projection estimator $\hat{\beta}_h$ will converge in
probability to the population projection coefficient
\begin{equation} \label{eqn:lp_estimand}
\beta_h \equiv \frac{\cov(g_{h}(X_{t}), X_t)}{\var(X_t)}.
\end{equation}
\citet[Propositions 1 and 2]{PMW2021} show that a \ac{VAR} which includes $X_t$
ordered first has the exact same population estimand~\eqref{eqn:lp_estimand},
provided that the number of lags in the \ac{VAR} is sufficiently
large.\footnote{If $X_t$ is linearly unpredictable from lagged data, it is
  sufficient that the lag length weakly exceed $h$.} It is a textbook result
that the linear function $\beta_{h}x$ provides the best \emph{linear}
approximation to the potentially nonlinear average structural function
$g_{h}(x)=\Psi_{h}(x)$ \citep[e.g.,][Theorem 3.1.6]{Angrist2009}, so that it
approximates the average structural function in a \emph{prediction} sense. However,
this result is not directly informative about whether $\beta_{h}$ has a
\emph{causal} interpretation---whether it can be interpreted as an average
marginal effect if $\Psi_{h}(x)$ is nonlinear.

The following proposition shows that the local projection and \ac{VAR}
estimand~\eqref{eqn:lp_estimand} achieves our goal: it has a causal
interpretation as an average marginal effect~\eqref{eqn:theta}. The result is
not new---it appeared previously in \citet{Yitzhaki1996} and
\citet{Rambachan2021}; as we discuss below, the novelty lies in substantively
weakening the regularity conditions.

\begin{prop}\label{thm:obsshock} Assume that $X_t$ is continuously distributed
  on an interval $I\subseteq\mathbb{R}$ (the interval may be unbounded, and
  could equal $\mathbb{R}$), with positive and finite variance. Assume that the
  conditional mean $g_h$ defined in~\eqref{eqn:cond_expn} is locally absolutely
  continuous on $I$.\footnote{That is, absolutely continuous on any compact
    interval contained in $I$.} Suppose finally that
  $E[\abs{g_h(X_{t})}(1+\abs{X_{t}})]<\infty$ and
  $\int_I \omega_X(x)|g_h'(x)|\, dx<\infty$, where
  \begin{equation}\label{eqn:obsshock_weights}
    \omega_X(x) \equiv \frac{\cov(\1{X_t \geq x}, X_t)}{\var(X_t)}.
  \end{equation}
  Then the estimand~\eqref{eqn:lp_estimand} satisfies
  \begin{equation*}
    \beta_{h} = \int_{I} \omega_X(x)g_h'(x)\, dx,
  \end{equation*}
  and the weight function $\omega_X$ has the following properties:
\begin{enumerate}[(i)]
\item It is convex: $\omega_{X}(x)$ is non-negative for all $x$, and integrates
  to one, $\int_I \omega_X(x)\, dx=1$.
\item It is hump-shaped: monotonically increasing from 0 to its maximum for
  $x \leq E[X_t]$, and then monotonically decreasing back to 0 for
  $x \geq E[X_t]$.
\item It depends only on the marginal distribution of $X_t$, and not on the
  conditional distribution of $Y_{t+h}$ given $X_{t}$.
\end{enumerate}
\end{prop}
Combined with the identification result~\eqref{eqn:obsshock_identif} for the
average marginal effect, \Cref{thm:obsshock} shows that linear local projections
and \acp{VAR} remain useful in a nonlinear world: they estimate an average
causal effect $\theta_h(\omega_X) = \int \omega_X(x)\Psi_h'(x)\, dx$ for
infinitesimal shocks, with a convex weighting scheme $\omega_{X}$. Furthermore,
the scheme gives most weight to shocks close to the mean $E[X_t]$, with little
weight given to extreme values. In the special case where $X_{t}$ is normally
distributed, \Cref{thm:obsshock} reduces to Stein's lemma \citep[Lemma 1
in][]{stein81}: the weight function $\omega_X$ reduces to the normal density function, so
that $\beta_{h}$ equals the expected marginal effect, $E[\Psi_{h}'(X_{t})]$, as
noted by \citet{Yitzhaki1996}. The fact that the weighting scheme
depends only on the marginal distribution of $X_{t}$ and not the particular
outcome variable $Y_{t+h}$ or horizon $h$ allows for comparisons of average marginal
effects for different outcomes or across different horizons $h$.\footnote{Since the weighting does not depend on the outcome variable or horizon, multipliers (i.e., ratios of cumulative impulse response estimands, see \citealp[Section 3]{Jorda2023}) can be expressed as a ratio of two average marginal effects with the same weight function, where we now additionally average across horizons.} If the true
\ac{DGP} is in fact linear, then the weighting of course does not matter, and we
recover the conventional linear impulse response.

While we focus here on interpreting the proposition in the context of the causal
model in \Cref{sec:framework}, the result does not require the structural
assumptions~\eqref{eqn:causal}--\eqref{eqn:indep}. This is relevant in settings
in which the conditional mean $g_h(x)=E[Y_{t+h} \mid X_t=x]$ is a useful
descriptive object even if it does not have a direct causal interpretation.

The assumption that $X_{t}$ is continuously distributed can be dropped without
changing the result, as we show in \Cref{sec:mte}. In cases where there are gaps
in the support of $X_{t}$, such as when the shock is discrete or mixed, one just
needs to extend the definition of the conditional mean function $g_{h}$ to the
whole interval $I$ by linear interpolation. To our knowledge, this unification
of the treatment of continuous, discrete, and mixed distributions is novel.

Even in the case of a continuously distributed shock, the assumptions in \Cref{thm:obsshock} are substantively weaker than those in the
literature, and accommodate all textbook linear models as well as a wide range
of nonlinear models. The assumption that $g_{h}$ is locally absolutely
continuous is necessary to ensure that weighted marginal effects are
well-defined. As discussed earlier, this assumption will typically hold even in
models with discrete regimes or kinks, since the
expectation~\eqref{eqn:obsshock_identif} averages over the distribution of the nuisance
shocks. The moment conditions and integrability condition
$\int_{I} \omega_X(x)\abs{g_h'(x)}\, dx<\infty$ just ensure that the estimand
$\beta_{h}$ and the weighted marginal effect
exist.\footnote{\Cref*{lemma:finite_integral_omega} in \Cref*{app:proofs} shows
  that for the integrability condition to hold, it is sufficient to assume the
  tails of $g_{h}(x)$ are monotone.} In contrast, the original work by
\citet{Yitzhaki1996} does not provide a formal proof or regularity conditions on
$g_{h}$ (neither does the discussion by \citealp[pp.\ 78 and 110]{Angrist2009}).
Analogous results in \citet{Rambachan2021}, \citet{Graham2022},
\citet{Caravello2024}, and \citet{Casini2024} require the potential outcome
function (not its expectation) to be smooth, which rules out models with kinks
or discrete regimes, and require the interval $I$ to be bounded, which rules out
the textbook case of normally distributed shocks.\footnote{Our result is not strictly more general than that of \citet{Casini2024}, since they allow for non-stationary potential outcomes.} The restrictiveness of these
conditions led \citet[Appendix C]{Goncalves2024} to question the applied
relevance of the causal interpretation of the estimand~\eqref{eqn:lp_estimand},
but our weaker conditions demonstrate that this concern is unfounded when interest centers on average marginal effects.

\paragraph{Estimating the weight function.}
As argued by \citet{Angrist1999} for the case of discrete $X_t$, the weight function $\omega_X$ defined in~\eqref{eqn:obsshock_weights} can be estimated in the data. This allows the researcher to gauge which weighted causal effect is being estimated: does it attach most weight to negative or positive shocks, small or large shocks? Since the weight function depends only on the shock variable itself and not the outcome variable or the impulse response horizon, it is only necessary to estimate a single function. We therefore recommend that researchers always estimate and plot this function.

Estimation is simple: $\omega_X(x)$ equals the
slope coefficient in a (population) regression of the indicator $\1{X_t \geq x}$
on $X_t$. This regression can be implemented in the data via \ac{OLS}, separately for each value $x$ of
$X_t$ observed in the data.\footnote{Pointwise confidence intervals can be obtained with conventional heteroskedasticity-robust standard errors. One could also use autocorrelation robust
  standard errors to allow for time series dependence of $X_t$, but causal
  interpretation is more challenging if the shocks are not independent.} In
applications, it may also be of interest to report an integral
$\int_{\underline{x}}^{\overline{x}} \omega_X(x)\, dx$ of the weight function
over an interval $x \in [\underline{x}, \overline{x}]$. \Cref*{app:weight_integr}
shows that we can estimate this integral by the slope coefficient in an OLS
regression of
$M_t \equiv \max\lbrace \min\lbrace X_t, \overline{x} \rbrace, \underline{x}
\rbrace$ on $X_t$.
In particular, to estimate the total weight $\int_0^\infty \omega_X(x)\, dx$
given to positive shocks, we simply regress
$M_t \equiv \max\lbrace X_t,0\rbrace$ on $X_t$.

To illustrate, we now empirically estimate the weight function $\omega_X$ for
various macroeconomic shocks considered in the handbook chapter by
\citet{Ramey2016}. We use \citeauthor{Ramey2016}'s replication code and data off
the shelf. In particular, prior to computing weights, all shocks are
residualized on the same control variables that she uses in her VARs and local
projections. The estimates of the weight functions are
obtained from OLS regression output, as described above. To demonstrate the ease of implementation, all steps of the
computations are carried out in Stata, like \citeauthor{Ramey2016}'s replication
code.\footnote{Our code and data are available at \url{https://github.com/mikkelpm/nonlinear_dynamic_causal}}

\Cref{fig:gov} shows the estimated weight functions for four identified
government spending shocks from the applied literature. Note that the shocks are
not entirely comparable due to differences in their precise definitions and
sample periods. The \citet{Blanchard2002} and \citet{Fisher2010} shocks, which
are intended to capture general government spending shocks, yield approximately
symmetric weight functions. By contrast, the \citet{BenZeev2017} and
\citet{Ramey2011} shocks, which capture news about future defense spending,
generate weight functions that are skewed towards positive shocks.
In fact, both these shocks exhibit a large positive outlier in the 3rd quarter of
1950 (the onset of the Korean War), reflected in the fat right tail of the
weight functions. In other words, impulse responses from local
projections or \acp{VAR} estimated off the latter two shocks will largely
reflect the causal effects of sharp military \emph{buildups}, rather than
retrenchments. This is important to remember when using empirical impulse
responses to discipline structural models that feature asymmetries (such as
downward nominal wage rigidity or borrowing constraints), since then
model-implied impulse responses with respect to positive government spending
shocks will differ from those for negative shocks. \Cref*{app:weight_empir} gives
further examples of weight functions for several identified tax, technology, and
monetary policy shocks. As these examples illustrate, plotting estimates of the
weights $\omega_{X}$ is useful in interpreting the results of any subsequent
impulse response analysis and for comparing with prior studies.

\begin{figure}[!t]
\centering
\textsc{Empirical weight functions: government spending shocks} \\
\includegraphics[width=\linewidth,clip=true,trim=0 1em 0 0]{./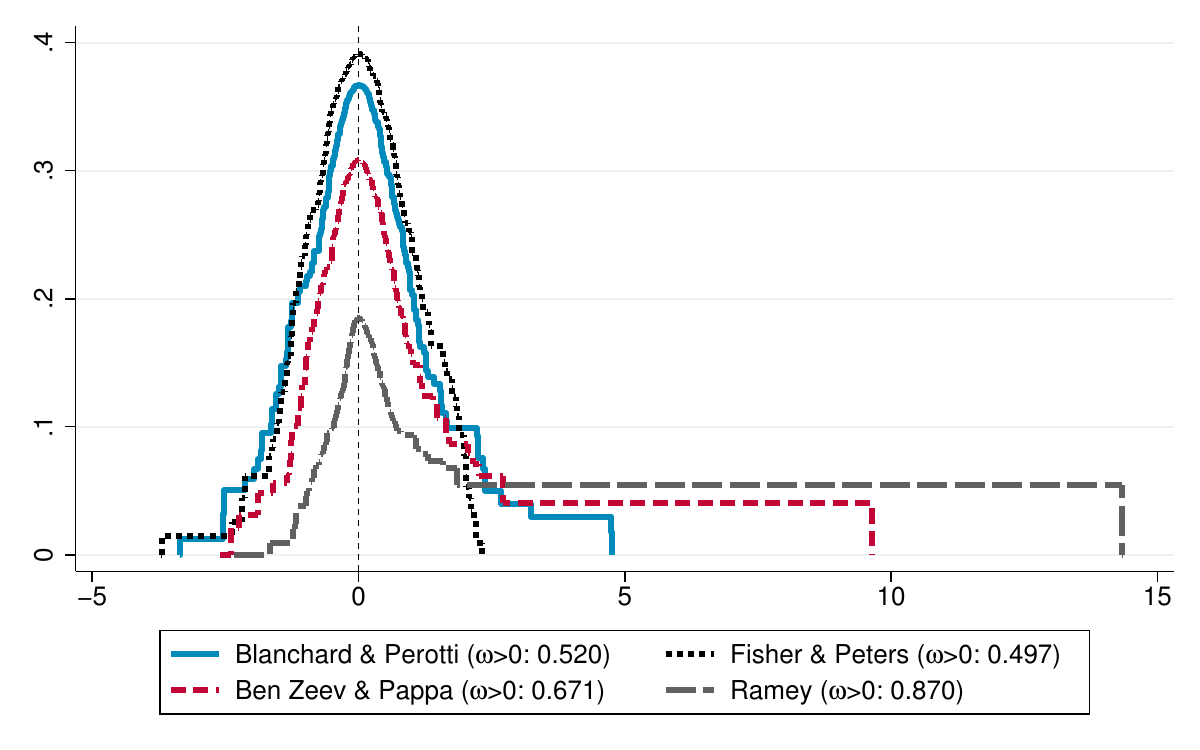}
\caption{Estimated causal weight functions $\omega_X$ for government spending shocks obtained from the replication files for \citet{Ramey2016}, quarterly data. Horizontal axis in units of standard deviations. ``$\omega>0$'': total weight $\int_0^\infty \omega_X(x)\, dx$ on positive shocks. Papers referenced: \citet{Blanchard2002}, \citet{Fisher2010}, \citet{BenZeev2017}, \citet{Ramey2011}.}\label{fig:gov}
\end{figure}

\paragraph{Parametric nonlinear specifications.}
In many cases, economic theory predicts that the average structural function
$\Psi_{h}$ is likely nonlinear. For example, if the outcome variable has limited support,
such as due to censoring or when it is discrete, the structural function must
necessarily be nonlinear. In such cases, it seems natural to model the
nonlinearity directly, rather than to stick to a linear specification as
in~\eqref{eqn:lp_reg}. For example, \citet{Jorda2005} and \citet{Jorda2025}
suggest including powers of the shock in local projections. Similarly, there is
a rich literature on nonlinear extensions of \ac{VAR} models, see for example
\citet[Chapter 18]{Kilian2017}. Such direct modeling of the nonlinearities is
sensible if the goal of the analysis is to directly characterize the extent and
types of nonlinearity present in the data, e.g., threshold effects or sign and
size dependence \citep{Caravello2024}.

However, for estimating average causal effects, simple linear local projections
or \acp{VAR} appear more robust than nonlinear parametric specifications. As
shown in \Cref{thm:obsshock}, the linear specification in~\eqref{eqn:lp_reg} is
robust to misspecification in that it estimates a well-defined average marginal
effect regardless of the form of nonlinearity in the structural function
$\Psi_{h}$. By contrast, we now show that this is not the case for a local
projection specification that includes a quadratic term, echoing similar results
in \citet{angrist_dummy_2001} regarding robustness of parametric nonlinear
limited dependent variable models. These results suggest that nonlinear
specifications do not generally have such a robustness property.

Consider a quadratic local projection of $Y_{t+h}$ on $X_t$, $X_t^2$, and an
intercept.\footnote{While we focus on the quadratic case for simplicity,
  \Cref{thm:quad_reg} below can be shown to generalize to a polynomial
  specification of any fixed order.}
Assume for analytical simplicity that $X_t$ has a standard normal distribution,
so in particular $X_t$ and $X_t^2$ are uncorrelated (our qualitative conclusions
can be shown to go through without the normality assumption). Then the
population version of the projection is
\begin{equation*}
  Y_{t+h} = \beta_{0,h} + \beta_{1,h} X_t + \beta_{2,h} X_t^2 + \text{residual}_{h, t+h},
\end{equation*}
with implied derivative of the regression function at $X_t=x$ given by
\begin{equation} \label{eqn:quad_reg_deriv}
\bar{\beta}_h(x) \equiv \beta_{1,h} + 2\beta_{2,h} x,
\end{equation}
and population regression coefficients
\begin{equation} \label{eqn:quad_reg_coef} \beta_{1,h} \equiv
  \frac{\cov(g_{h}(X_{t}), X_t)}{\var(X_t)}, \quad \beta_{2,h} \equiv
  \frac{\cov(g_{h}(X_{t}), X_t^2)}{\var(X_t^2)}.
\end{equation}

\begin{prop}\label{thm:quad_reg}
  Assume that $X_t \sim N(0,1)$, and that $g_{h}$ defined
  in~\eqref{eqn:cond_expn} is differentiable with a derivative that is locally
  absolutely continuous on $\mathbb{R}$. Finally, assume
  $E[\abs{g_{h}(X_{t})}+\abs{g_{h}'(X_{t})}+\abs{g_{h}''(X_{t})}]<\infty$. Then, using the
  definitions~\eqref{eqn:quad_reg_deriv}--\eqref{eqn:quad_reg_coef},
  \begin{equation}\label{eq:bar_beta}
    \bar{\beta}_h(x) = E[(1+X_{t}x)g_h'(X_{t})] = E[g_h'(X_{t})] + xE[g_h''(X_{t})].
  \end{equation}
\end{prop}
The first expression in~\eqref{eq:bar_beta} shows that the estimated derivative
$\bar{\beta}_h(x)$ equals a weighted average of the true derivative function
$g_h'(\cdot)$, but with weights that are negative whenever
$1+X_{t}x<0$.\footnote{It also follows from the proposition that any estimated
  weighted average derivative $\int \omega(x) \bar{\beta}(x)\, dx$ that is a
  nontrivial function of the coefficient $\beta_{2,h}$ (i.e., whenever
  $\int x\omega(x)\, dx \neq 0$) equals a weighted average of $g_h'(\cdot)$ with
  weights that are negative for some $x$.} If the true regression function $g_h$
is in fact quadratic, then $\bar{\beta}_{h}(x)$ is consistent for the
marginal effect function $\Psi_h'(x)$. But if the regression function is
misspecified, the negative weighting leads to a sign reversal: the second
expression in~\eqref{eq:bar_beta} implies that even if $g_h$ is monotonically
increasing, the estimated derivative $\bar{\beta}_h(x)$ will be negative for
sufficiently large $x$ whenever $E[g_h''(X)]<0$. Such sign reversal is not
shared by the linear estimator~\eqref{eqn:lp_reg}, for which the weighting
scheme $\omega_{X}$ is convex. This lack of robustness of a quadratic (or more
generally polynomial) specification of the regression function to functional
form misspecification is related to the observation in \citet{white_using_1980}
that polynomial approximations to the conditional mean function $g_{h}$ cannot
be interpreted as providing a Taylor series approximation to $g_{h}$.

\paragraph{State-dependent specifications.}
A particularly popular nonlinear local projection specification in applied work is a state-dependent specification that interacts the shock with a binary regime indicator $S_t \in \lbrace 0,1 \rbrace$ (see \citealp{Cloyne2023}, \citealp{Goncalves2024}, and references therein):
\begin{equation*}
  Y_{t+h} = \hat{\beta}_{0,h}(1-S_t) X_t + \hat{\beta}_{1,h} S_t X_t +
  \hat{\gamma}_{0,h}'(1-S_t)\vec{W}_t +
  \hat{\gamma}_{1,h}'(S_t\vec{W}_t) + \text{residual}_{h, t+h}.
\end{equation*}
For example, $S_t$ may indicate whether the economy is in a recession. Assuming
that the local projection is fully interacted as above (i.e., all control
variables $\vec{W}_t$ are interacted with $S_t$), then the procedure is
tantamount to running separate regressions on the subsamples with $S_t=0$ and
$S_t=1$, respectively.\footnote{If we instead omit the interaction terms from
  the regression and only control linearly for $S_t$, then we are in the case of
  \Cref{sec:ident-with-contr,sec:ident-with-contr-1} below.} It follows that all
the analysis surrounding \Cref{thm:obsshock} above applies upon conditioning on
$S_t=s \in \lbrace 0,1\rbrace$. In particular, the probability limit of the
state-dependent impulse response estimate $\hat{\beta}_{s, h}$ equals a
positively weighted average of conditional marginal effects
$\partial E[Y_{t+h} \mid X_t=x,S_t=s]/\partial x$, which have a clear causal
interpretation provided the shock independence assumption~\eqref{eqn:indep}
holds conditional on $S_t$ (i.e., within each regime). Thus, despite their
apparent linearity conditional on regime, state-dependent local projections
identify causal estimands even when the true \ac{DGP} has a nonlinear form, such
as a model with smooth or discrete regime-switching. However, consistent with
the discussion in \Cref{sec:framework}, it is important to interpret the impulse
responses as averaging over all future shocks, \emph{including} potential future
regime switches. In other words, the local projection estimand does not hold the
regime fixed within the impulse response horizon.

\subsection{Identification with proxies}\label{sec:proxy}

In many applications, observations of the shock are contaminated by measurement
error, such as when accurate measurements are available only in a subset of the
time periods. In such cases, researchers typically treat the measurements
$Z_{t}$ as a \emph{proxy} for the shock of interest $X_{t}$, or, equivalently,
an instrument for the shock \citep[see][for a review]{Stock2018}. We now show
that when the structural function is nonlinear, linear \acp{VAR} and local projections onto
the proxy identify average marginal effects up to scale, provided that the
conditional mean of the proxy given the shock is monotone in the shock.

We assume that the proxy $Z_{t}$ is valid, in the sense that it satisfies the
exclusion restriction
\begin{equation}\label{eqn:proxy}
E[Y_{t+h} \mid X_t, Z_t] = E[Y_{t+h} \mid X_t] = g_{h}(X_{t}),
\end{equation}
formalizing the notion that if the shock $X_{t}$ were observed, the proxy $Z_{t}$
would not provide any further predictive power. It is implied by the standard
assumption in the measurement error literature that the measurement error in
$Z_{t}$ is non-differential, i.e., the whole conditional distribution of
$Y_{t+h}$ given $(X_{t}, Z_{t})$ depends only on $X_{t}$ (or equivalently that
$Z_{t}$ is independent of $\vec{U}_{h, t+h}$) \citep[e.g.,][Chapter
2.6]{CaRuSt06}.

We consider the ``reduced-form'' local projection of the outcome $Y_{t+h}$ on
the proxy $Z_t$. Under~\eqref{eqn:proxy}, the population version of this
regression has slope coefficient
\begin{equation}\label{eqn:proxy_lp_estimand}
  \tilde{\beta}_h \equiv \frac{\cov(\zeta(X_{t}), g_{h}(X_{t}))}{\var(Z_{t})},
\end{equation}
where
\begin{equation} \label{eqn:proxy_1st}
\zeta(x) \equiv E[Z_t \mid X_t=x].
\end{equation}
As shown by \citet{PMW2021}, $\tilde{\beta}_{h}$ also corresponds to the
probability limit of an impulse response from a structural \ac{VAR} where the
proxy is ordered first, and the specification controls for sufficiently many
lags.

\begin{prop}\label{thm:proxy}
  Assume that $X_t$ is continuously distributed on an interval
  $I\subseteq\mathbb{R}$ (the interval may be unbounded, and could equal
  $\mathbb{R}$), and that the variance of $Z_{t}$ is positive and finite. Assume
  that the conditional mean $g_h$ defined in~\eqref{eqn:cond_expn} is locally
  absolutely continuous on $I$, and
  $E[\abs{g_h(X_{t})}(1+\abs{\zeta(X_{t})})]<\infty$. Finally, assume that
  $\int_{I} \abs{\tilde{\omega}_{Z}(x)g_h'(x)}\, dx<\infty$, where
  \begin{equation} \label{eqn:proxy_weights} \tilde{\omega}_Z(x) \equiv
      \frac{\cov(\1{X_t \geq x}, \zeta(X_t))}{\var(Z_t)},
    \end{equation}
    and that for sufficiently large positive and negative $x$, the function
    $\zeta(x)-E[Z_{t}]$ does not change sign.\footnote{That is, there exist $\underline{x}, \overline{x} \in I$ and $\underline{\iota}, \overline{\iota} \in \lbrace -1,1\rbrace$ such that $\underline{\iota}\lbrace \zeta(x)-E[Z_t]\rbrace \geq 0$ for all $x \leq \underline{x}$ and $\overline{\iota}\lbrace \zeta(x)-E[Z_t]\rbrace \geq 0$ for all $x \geq \overline{x}$.} Then, the proxy
    estimand~\eqref{eqn:proxy_lp_estimand} satisfies
  \begin{equation*}
    \tilde{\beta}_h = \int_I \tilde{\omega}_Z(x)g_h'(x)\, dx.
  \end{equation*}
  The weight function $\tilde{\omega}_Z$ has the following properties:
\begin{enumerate}[(i)]
\item It is equivariant to additive and multiplicative measurement error: If
  $\tilde{Z}_t = V_{1t} + V_{2t}Z_t$, where $(V_{1t}, V_{2t})$ is a bivariate
  random vector independent of $(X_t, Z_t)$, then
  $\tilde{\omega}_{\tilde{Z}}(x) =
  \frac{E[V_{2t}]\var(Z_{t})}{\var(\tilde{Z}_{t})}\tilde{\omega}_Z(x)$ for all
  $x$.
\item\label{itm:proxy_weights_pos} It is nonnegative,
  $\tilde{\omega}_Z(x) \geq 0$, provided that
  $E[Z_{t} \mid X_t \geq x] \geq E[Z_t \mid X_t < x]$.
\item It depends only on the joint distribution of $(X_t, Z_t)$, but
  not on the conditional distribution of $Y_{t+h}$ given $X_{t}$.
\end{enumerate}
A sufficient condition for property (\ref{itm:proxy_weights_pos}) is that the
conditional mean function $\zeta(x)$ is monotone increasing. Under this
assumption, $\tilde{\omega}_Z$ is also hump-shaped: monotonically increasing
from 0 to its maximum for $x \leq x_{0}$, and then monotonically decreasing back
to 0 for $x \geq x_{0}$, where
$x_{0}\equiv\inf\{x\in I\colon \zeta(x)\geq E[Z_{t}]\}$.
\end{prop}

Combining \Cref{thm:proxy} with the identification
result~\eqref{eqn:obsshock_identif} implies that linear proxy regressions
identify weighted averages of marginal effects,
$\theta_h(\tilde{\omega}_Z) = \int \tilde{\omega}_Z(x)\Psi_h'(x)\, dx$, just as
in the case of directly observed shocks. Unlike in the observed shocks case, the weights $\tilde{\omega}_{Z}$ will not be
positive unless the proxy satisfies the condition in
point~(\ref{itm:proxy_weights_pos}) of \Cref{thm:proxy}---this condition is
slightly weaker than monotonicity of $\zeta(x)$.\footnote{For instance, the
  condition may still hold even if monotonicity is violated over a sufficiently
  small interval in the middle of the support of $X_{t}$.} However, monotonicity
of $\zeta$ ensures not just that the weights are positive, but also that they
have an intuitive hump-shape, giving most weight to shocks in the middle of the
distribution.

Monotonicity of $\zeta$ is implied by, but much weaker than the
continuous-treatment version of the \citet{ImAn94} monotonicity condition,
needed for causal interpretation of two-stage least squares estimands under
endogeneity. We defer the details to \Cref*{app:ident-with-instr}, where we
generalize the identification results in \citet{AnGrIm00} by allowing for
non-smooth potential outcome functions and non-binary $Z_{t}$; we also extend
\Cref{thm:proxy} to the case with covariates. It follows from this identification
result that monotonicity of $\zeta$ holds under much weaker conditions than
those required for causal interpretation of $\tilde{\beta}_{h}$ under
endogeneity. \Citet[Theorem 7]{Rambachan2021} derive an alternative
characterization of the proxy estimand~\eqref{eqn:proxy_lp_estimand} involving
derivatives of the \emph{reduced-form} potential outcome as a function of the
proxy $Z_t$ (rather than of the shock $X_t$), and therefore the monotonicity
assumption has no counterpart in their analysis.

A practical implication of \Cref{thm:proxy} is that applied researchers should
seek to construct proxies that are credibly positively related to the unobserved
latent shock of interest. However, it is not essential that the relationship is
linear or indeed of any particular known functional form. Of course,
constructing valid proxies may be challenging in practice: measurement error in their construction may fail to be non-differential, leading
to violation of the exclusion restriction~\eqref{eqn:proxy}, and lack of
variability of $Z_{t}$ in the data may lead to large standard errors for
$\tilde{\beta}_{h}$. The purpose of \Cref{thm:proxy} is to clarify the value
of this hard work, if done well.

\begin{exm}
An interesting example of a proxy is one constructed from so-called ``narrative sign restrictions'', where it is assumed that the researcher observes not the shock itself, but a discrete signal of whether a large shock occurred. While \citet{AntolinDiaz2018} and \citet{Giacomini2023} exploit such restrictions in a likelihood framework, \citet{PMW2021} and \citet{PM2022} recommend treating them as a special case of proxy identification.

As a concrete example, assume that for some constants $c_{1}, c_{2}\geq 0$ (which
may be unknown to the econometrician),
$Z_t = \1{X_t \geq c_2} - \1{X_t \leq -c_1}$. That is, the
proxy equals 1 for sufficiently large positive shocks, $-1$ for sufficiently
large negative shocks, and is otherwise uninformative.\footnote{This example
  assumes that we correctly classify \emph{all} episodes with shocks of
  sufficiently large magnitude. However, \Cref{thm:proxy} shows that the
  calculations continue to apply (up to scale) even if there is random
  misclassification of the form
  $Z_t=V_t[\1{X_t \geq c_2} - \1{X_t \leq -c_1}]$, where $V_t$
  is a Bernoulli random variable that is independent of $(X_t, Y_{t+h})$.} Let
$F_X(x) \equiv P(X_t \leq x)$ be the \ac{CDF} of $X_t$. Then the weight function
$\tilde{w}_Z(x)$ is nonnegative and proportional to
\begin{equation*}
  \cov(\1{X_t \geq x}, Z_t) = \begin{cases}
    F_X(x)[2-F_X(c_2)-F_X(-c_1)] & \text{for } x \leq -c_1, \\
    F_X(x)[1-F_X(c_2)-F_X(-c_1)] + F_X(-c_1) & \text{for } x \in (-c_1,c_2), \\
    [1-F_X(x)][F_X(c_2)+F_X(-c_1)] & \text{for } x \geq c_2,
  \end{cases}
\end{equation*}
as can be verified through direct calculation. It is easy to see that the above weight function is hump-shaped: monotonically increasing until either $x=-c_1$ or $x=c_2$ (depending on the sign of $1-F_X(c_2)-F_X(-c_1)$), and then monotonically decreasing. Arguably, such a weight function is economically sensible. In fact, if $1-F_X(c_2)=F_X(-c_1)$ (as would be the case if $c_1=c_2$ and the distribution of $X_t$ were symmetric around 0), then the weight function is ``nearly'' uniform as it is shaped like a plateau: increasing for $x<-c_1$, then flat for $x \in [-c_1,c_2]$, then decreasing.

This example shows that conventional proxy local projections or \acp{VAR} can estimate meaningful causal summaries even if the proxy (which here is discrete) is quite nonlinearly related to the true (continuous) shock, and in ways that are not directly known to the econometrician. This robustness may not be shared by likelihood-based approaches to identification via narrative restrictions.
\end{exm}

The weight function~\eqref{eqn:proxy_weights} does not integrate to 1 due to
attenuation bias, so that we only identify average marginal effects up to scale.
However, since the weight function doesn't depend on the outcome, this is not an
issue in practice: we can scale $\tilde{\beta}_{h}$ by the response of some
normalization variable to the proxy (this is the so-called unit effect
normalization) to identify a \emph{relative} marginal effect. The local
projection instrumental variable estimator of \citet{Stock2018}, which is a
two-stage least squares version of local projection, automatically performs this
normalization.

Since the shock $X_t$ is not directly observed, we cannot generally estimate the
weight function $\tilde{\omega}_Z$ in the data. Instead, it may be useful to
plot the observed-shock weight function~\eqref{eqn:obsshock_weights} pretending
that $Z_{t}$ is the actual shock of interest. If it happens that
$Z_t \approx X_t$, then these weights will be close to the proxy weights
$\tilde{\omega}_Z$, so the plot provides a ``best-case'' scenario.

\subsection{Identification with control variables}\label{sec:ident-with-contr}
In applications where it is challenging to isolate purely exogenous shifts in policy or fundamentals, researchers may be willing to assume that the observed variable $X_t$ (which could be a policy instrument) is exogenous conditional on some control variables $\vec{W}_t$ (such as variables that comprise the policy-makers information set):
\begin{equation}\label{eqn:selection_obs}
X_t \independent \vec{U}_{h, t+h} \mid \vec{W}_t.
\end{equation}
This is a selection on observables assumption as in \citet{Angrist2011} and \citet{Angrist2018}. For example, the assumption holds if $X_t = \Upsilon(\varepsilon_t, \vec{W}_t)$, where $\varepsilon_t$ is a shock that is independent of $(\vec{W}_t', \vec{U}_{h, t+h}')'$, a nonparametric version of the recursive (or Cholesky) assumption in linear structural \ac{VAR} identification \citep[e.g.,][]{Christiano1999}. Then the conditional expectation function
\begin{equation*}
  g_h(x, \vec{w}) \equiv E[Y_{t+h} \mid X_t=x, \vec{W}_t=\vec{w}]
\end{equation*}
equals the \emph{conditional} average structural function in the causal
model~\eqref{eqn:causal}:
\begin{equation*}
  g_h(x, \vec{w}) = E[\psi_h(x, \vec{U}_{h, t+h}) \mid
  \vec{W}_t=\vec{w}] \equiv \Psi_h(x, \vec{w}),
\end{equation*}
where the expectation is taken with respect to the conditional distribution of
the nuisance shocks $\vec{U}_{h, t+h}$ given $\vec{W}_{t}$.

Even under the selection on observables assumption~\eqref{eqn:selection_obs},
the local projection with controls~\eqref{eqn:lp_reg} need not estimate an
average marginal effect if the relationship between $X_t$ and the controls is
nonlinear. This result extends to recursively identified structural \acp{VAR},
due to the nonparametric equivalence between these procedures \citep{PMW2021}.
\Cref{theorem:ols_covariates} below shows that the population local projection
coefficient $\beta_h$ can still be written as a weighted average of the marginal
effects
$\partial g(x, \vec{w})/\partial x=\partial \Psi_h(x, \vec{w})/\partial x$, but
the weights can be negative if the true ``propensity score''
$\pi^*(\vec{w}) \equiv E[X_t \mid \vec{W}_t = \vec{w}]$ is nonlinear (or, in other words, if the purported ``shock'' that is the residual from a linear projection $X_t$ on the controls $\vec{W}_t$ is predictable by \emph{nonlinear} transformations of the controls). We leave
the details, which extend the analysis of \citet{GoldsmithPinkham2024} to cases
with non-discrete $X_t$, to \Cref{sec:mte}. As usual, negative weights are
worrying, as they may lead to a sign-reversal. If $\vec{W}_{t}$ just consists of a set of mutually exclusive dummies,
then linearity of the propensity score comes for free, and the weights are
guaranteed to be positive. However, outside this case, we recommend that researchers do careful
sensitivity checks with respect to both the set of controls and the functional
form for the controls (e.g., whether the addition of nonlinear transformations
of the controls, such as interactions and polynomials, improves prediction of
$X_{t}$). More research is warranted on best practices for such sensitivity checks, given the data limitations in applied macroeconomics.

\section{The bad: identification via heteroskedasticity}\label{sec:heterosk}

Identification via heteroskedasticity has become a popular procedure for causal
identification in applications where direct shock measures are unavailable,
following \citet{Sentana2001}, \citet{Rigobon2003}, and
\citet{Rigobon2004}.\footnote{Similar identification approaches were developed
  in the signal-processing literature in the 1990s, see the review by
  \citet[Section 18.2]{Hyvarinen2001}.} In a pair of highly-cited papers,
\citet{Lewbel2012,Lewbel2018} exploits this idea to achieve identification in
cross-sectional regressions with endogenous variables and no external
instruments \citep[see also][for a related approach]{Klein2010}. In stark contrast to \Cref{sec:shock}, the results
in this section deliver \emph{bad} news regarding the sensitivity of
identification approaches via heteroskedasticity to the assumption that the
underlying structural function is linear: the Rigobon-Sack-Lewbel estimator does
not generally estimate average marginal effects; more generally, we show that
the nonparametric analogue of the identification approach yields very large
identified sets for causal effects. However, on the positive side, it is
possible to test the linearity assumption in the data.

\subsection{Nonparametric version of the identification approach}

To explain the sensitivity of conventional identification via
heteroskedasticity to the linearity assumption on the structural function, it
is helpful to first lay out a nonparametric version of the framework before we
review the linear case. Since this section is mainly concerned with giving
examples of how the identification approach can fail, we specialize the dynamic
set-up from \Cref{sec:framework} to a simpler static model.

\paragraph{Nonparametric setup.}

We observe an $n$-dimensional vector $\vec{Y}$ of variables that are
nonlinearly related to a latent, scalar shock of interest $X$ as well as an
$(m-1)$-dimensional latent vector $\vec{U}$ of nuisance shocks:
\begin{equation} \label{eqn:factor_nonparam} \vec{Y} =
  \vec{\psi}(X, \vec{U}), \quad X \independent \vec{U},
\end{equation}
where we suppress time subscripts to ease notation. The above model is a
(static) \emph{nonparametric factor model}, since we do not impose parametric
restrictions on the unknown structural function
$\vec{\psi} \colon \mathbb{R}^m \to \mathbb{R}^n$.

The econometrician observes a scalar $D$ that is informative about the
heteroskedasticity of the shock of interest $X$ but independent of the nuisance
shocks $\vec{U}$ (jointly with $X$):
\begin{equation}\label{eqn:hetero_indep}
(D, X) \independent \vec{U}.
\end{equation}
This assumption implies that $D$ is a valid proxy for $X$, in the sense that
$\vec{Y}$ and $D$ are independent conditional on $X$. But because the
variable $D$ only influences the variance and higher moments of $X$ but not its
mean,
\begin{equation} \label{eqn:hetero_mean}
E[X \mid D]=0,
\end{equation}
we cannot use the proxy in local projections as in \Cref{sec:proxy}. For
concreteness, it may be useful to think of $D$ as a binary \emph{regime
  indicator}, which affects the conditional variance $\var(X \mid D)$ but not
the conditional mean~\eqref{eqn:hetero_mean}, as in the original work by
\citet{Rigobon2003}.

If we assume that the structural function $\vec{\psi}$ is linear, it is
possible to achieve identification even if $D$ is unobserved, and we
relax~\eqref{eqn:hetero_indep} by allowing $D$ to affect the variances of nuisance shocks. See
\citet{Bacchiocchi2024} and \citet[Section 3]{Lewis2024} for excellent reviews. However, since we are only
interested in showing how the basic identification approach can fail in a
nonparametric context, we maintain the stronger assumptions above. It then
follows \emph{a fortiori} that nonparametric identification is even more
challenging under weaker assumptions.

\paragraph{Review of linear identification.}
If the structural function $\vec{\psi}$ in~\eqref{eqn:factor_nonparam} is known to be partially linear, identification of causal effects obtains under an additional relevance assumption. Thus, we temporarily assume that
\begin{equation} \label{eqn:hetero_linear}
\vec{\psi}(x, \vec{u})= \vec{\theta} x + \vec{\gamma}(\vec{u}),
\end{equation}
where $\vec{\theta}$ is the unknown vector of causal effects of $X$, while $\vec{\gamma} \colon \mathbb{R}^{m-1} \to \mathbb{R}^n$ is an unknown function. Following \citet{Rigobon2004} and \citet{Lewbel2012}, construct the scalar instrumental variable
\begin{equation} \label{eqn:hetero_iv}
Z \equiv (D-E[D])Y_1,
\end{equation}
where $Y_1$ is the first element of $\vec{Y}$. In applications, $Y_1$ may be a policy instrument that is known to be strongly related to $X$, though it is also allowed to be correlated with the nuisance shocks. Under the linear model~\eqref{eqn:hetero_linear} and the identification assumptions~\eqref{eqn:hetero_indep}--\eqref{eqn:hetero_mean}, $Z$ satisfies the exogeneity restriction for linear identification in \citet{Stock2018} since $E[Z \mid \vec{U}]=0$. In particular, under these assumptions, a regression of $\vec{Y}$ on $Y_1$ using $Z$ as an instrument identifies the (relative) causal effects of $X$:
\begin{equation}\label{eqn:hetero_identif_linear}
\frac{1}{\cov(Y_1,Z)}\cov(\vec{Y}, Z) = \frac{1}{\theta_1}\vec{\theta}.
\end{equation}
To ensure we are not dividing by zero, we need to additionally assume the relevance conditions that (i) the shock of interest is heteroskedastic across regimes, $\cov(X^2,D) \neq 0$, and (ii) the causal effect of $X$ on $Y_1$ is nonzero, $\theta_1 \neq 0$. For completeness, we review the calculations leading to~\eqref{eqn:hetero_identif_linear} in \Cref*{app:hetero_identif_linear}.

\subsection{Fragility under nonlinearity}
We now argue that the simple linear identification argument fundamentally cannot
be extended to nonparametric contexts.

\paragraph{Nonparametric identified set.}
We first show that the nonparametric model of identification via
heteroskedasticity yields a large identified set for the causal effects of $X$
on $\vec{Y}$. To do this, we strengthen the independence and conditional mean
assumptions~\eqref{eqn:hetero_indep}--\eqref{eqn:hetero_mean} by imposing a
specific model for the relationship between $X$ and $D$:
\begin{equation}\label{eqn:hetero_X}
X = \sigma(D)\Ra, \quad \text{where $\Ra, D$, and $\vec{U}$ are mutually independent,}
\end{equation}
$\sigma \colon \mathbb{R} \to \mathbb{R}_+$ is a \emph{known} function, and $\Ra$ has a \emph{known} distribution that is symmetric around 0. This model would, for example, be consistent with the conditionally Gaussian model $X \mid D \sim N(0,\sigma^2(D))$.

\begin{prop}\label{thm:hetero_identif}
  Assume that $(\vec{Y}, D, \Ra, X, \vec{U})$ satisfy the nonparametric
  factor model~\eqref{eqn:factor_nonparam} and identification
  assumption~\eqref{eqn:hetero_X}. Then there exists an alternative structural
  function $\tilde{\vec{\psi}} \colon \mathbb{R}^2 \to \mathbb{R}^n$ and
  a scalar random variable $\tilde{U}$ independent of $(\Ra, D, X)$ such that
  $(\tilde{\vec{Y}}, D)$ has the same joint distribution as
  $(\vec{Y}, D)$, where
  \begin{equation*}
    \tilde{\vec{Y}} \equiv \tilde{\vec{\psi}}(X, \tilde{U}),
  \end{equation*}
and such that $\tilde{\vec{\psi}}(-x, \tilde{u})=\tilde{\vec{\psi}}(x, \tilde{u})$ for all $x, \tilde{u}$.
\end{prop}

The proposition states that the identified set for $\vec{\psi}$ is so large that
it always contains a structural function $\vec{\psi}(x, \vec{u})$ that is
symmetric in $x$ around 0. In particular, we can never rule out that the average
marginal effect
$\int \omega(x) (\partial E[\vec{\psi}(x, \vec{U})]/\partial x)\, dx$ is zero
when the weight function $\omega(x)$ is symmetric around 0. Intuitively, the
challenge is that $D$ does not affect the mean of $X$, only higher moments,
so---without strong functional form restrictions on the relationship between the
outcomes and the shocks---we do not have enough information to sign mean effects
of shifts in the latent shock $X$. This holds even though we assume that the
econometrician knows exactly how $D$ affects the dispersion of the $X$
distribution. Notice that the construction of the observationally equivalent
symmetric structural function in \Cref{thm:hetero_identif} only relies on a
single (scalar) nuisance shock; hence, knowledge about the true number of shocks
does not ameliorate the identification failure (see \Cref{sec:nongauss} for
further discussion of this point).

A careful inspection of the proof of \Cref{thm:hetero_identif} reveals that the
result is closely related to a known issue with identification via
heteroskedasticity in a linear context: while the variance of the shock of
interest $X$ must vary across regimes, we cannot simultaneously allow the
impulse responses of $X$ to vary across regimes \citep[see][Section 6.1, for a
discussion and references]{Lewis2024}. However, in a nonparametric context this
problem is even worse, since there is no fundamental distinction between
``coefficients'' and ``shock variances'' in a general nonlinear model.
\emph{A priori}
restrictions that certain ``coefficients'' are independent of the regime are
only meaningful once we parametrize the model, which complicates the development of an
empirically useful nonparametric generalization of the identification approach.

\paragraph{Sensitivity of linear procedures.}

Because the nonparametric identified set is large, we can expect estimation
procedures based on linearity of the structural function to fail to estimate
causal objects in general. The next result implies that this is indeed the case
for the linear instrumental variable estimator~\eqref{eqn:hetero_identif_linear} of \citet{Rigobon2004} and
\citet{Lewbel2012}.

\begin{prop}\label{thm:hetero_iv}
Assume the additively separable structural model
\begin{equation*}
  \vec{Y} = \vec{\Psi}(X) + \vec{\gamma}(\vec{U}),
\end{equation*}
where $\vec{\Psi} \colon \mathbb{R} \to \mathbb{R}^n$,
$\vec{\gamma} \colon \mathbb{R}^{m-1} \to \mathbb{R}^n$, and we normalize
$E[\vec{\Psi}(X)]=E[\vec{\gamma}(\vec{U})]=\vec{0}$. Suppose that the
independence assumption~\eqref{eqn:hetero_indep} holds, and let $Z$ be given
by~\eqref{eqn:hetero_iv}. Suppose that the variables $(\vec{Y}, Z, D)$ have
finite second moments, and that the support of $X$ is given by the interval
$I\subseteq\mathbb{R}$ (the interval may be unbounded, and could equal
$\mathbb{R}$). Suppose also that for each $j$, $\Psi_{j}$ (the $j$-th component
of $\vec{\Psi}$) is locally absolutely continuous on $I$, and that for some
$\underline{x}, \overline{x}\in I$, $\Psi_{j}(x)$ is monotone for
$x\leq \underline{x}$ and for $x\geq \overline{x}$. Then
\begin{equation*}
  \cov(\vec{Y}, Z) = \int \check{\omega}(x) \vec{\Psi}'(x)\, dx,
\end{equation*}
where
\begin{equation}\label{eqn:hetero_iv_weights}
\check{\omega}(x) \equiv \cov\big(\1{X \geq x}, \Psi_1(X)(D-E[D])\big).
\end{equation}
\end{prop}

\Cref{thm:hetero_iv} shows that regressing $\vec{Y}$ onto the instrument
$Z$ yields a weighted average of marginal effects, but with a weight function
$\check{\omega}(x)$ that cannot be guaranteed to be positive.\footnote{This is
  not a special case of \Cref{thm:proxy}, since $Z$ does not satisfy the
  nonparametric proxy assumption~\eqref{eqn:proxy}.} In fact, the weights even
integrate to 0 in some cases, for example if $\Psi_1(x)=\Psi_1(-x)$ and the
conditional distribution of $X$ given $D$ is symmetric around 0. In such cases,
the instrument erroneously estimates a zero causal effect
of $X$ on $Y_j$ for $j \geq 2$ even if $\Psi_j(x)=\theta_j x$ is a linear
function with $\theta_j \neq 0$.

The weights can also be negative---and therefore cause the econometrician to get
the sign of the marginal effects wrong---even in the seemingly favorable setting
where (unbeknownst to the econometrician) the policy variable $Y_{1}$ simply
equals the shock of interest $X$, without any nonlinearity or contamination by
nuisance shocks, i.e., $\Psi_{1}(x)=x$. Assume in addition, as in
\citet{Rigobon2003}, that the regime indicator $D \in \lbrace 0,1\rbrace$ is
binary and $E[X \mid D]=0$. Then a simple calculation shows that the
weights in~\cref{eqn:hetero_iv_weights} equal
\begin{equation}\label{eqn:hetero_iv_weights_special}
\check{\omega}(x) = \var(D) \int_x^{\infty} [f_{X|1}(v)-f_{X|0}(v)]v\, dv = \var(D) \int_{-\infty}^x [f_{X|0}(v)-f_{X|1}(v)]v\, dv,
\end{equation}
where $f_{X|d}(x)$ is the density of $X$ conditional on regime $D=d$. Suppose
the right (resp., left) tail of the $X$ distribution is fatter (resp., thinner)
in regime $D=1$ than in regime $D=0$, meaning that $f_{X|1}(x) > f_{X|0}(x)$
for $x \gg 0$ and $f_{X|0}(x) > f_{X|1}(x)$ for $x \ll 0$. Then it follows
from \cref{eqn:hetero_iv_weights_special} that $\check{\omega}(x)>0$ for
$x \gg 0$, while $\check{\omega}(x)<0$ for $x \ll 0$. This simple example shows
that the instrumental variable estimator can easily generate negative weights,
even when it satisfies the exclusion and relevance conditions and the policy
variable is linear in the shock. To trust that the weights are positive, we
would need to have quite detailed information about the conditional shock
density in the two regimes; simple moment restrictions do not suffice.

Intuitively, the problem of negative weights comes about because the
\citet{Rigobon2003} and \citet{Lewbel2012} instrumental variable $Z$ defined
in~\eqref{eqn:hetero_iv} fails the proxy monotonicity assumption discussed
earlier in connection with \Cref{thm:proxy}. Because the only source of
exogenous variation is the regime indicator $D$, and this indicator does not
affect the mean of the latent shock $X$ but only higher moments, it is generally
impossible to construct any proxy variable that is guaranteed to be monotone in
$X$, unless we make strong assumptions about the structural function.

If the model is not additively separable as assumed in \Cref{thm:hetero_iv}, the
instrumental variables estimator can exhibit even more pathological behavior, in
that it may not equal a weighted average of marginal effects at all. As a simple
example, consider a multiplicative model
$\vec{Y} = X\vec{\gamma}(\vec{U})$ with
$E[\vec{\gamma}(\vec{U})]=\vec{0}$ and impose the independence
assumption~\eqref{eqn:hetero_indep}. In that model, $E[\vec{Y} \mid X] = \vec{0}$,
so the marginal effect function is identically zero, but
$\cov(\vec{Y}, Z)=\cov(X^2,D)\cov(\gamma_1(\vec{U}), \vec{\gamma}(\vec{U}))
\neq 0$ in general, so the instrument erroneously estimates a nonzero effect.

\subsection{Silver lining: testability of the linearity assumption}

While the sensitivity of identification via heteroskedasticity to linearity of
the structural function $\vec{\psi}$ is disheartening, at least the linear
model~\eqref{eqn:hetero_linear} implies testable restrictions. Specifically, as
noted by \citet{Rigobon2004} and \citet{Wright2012}, for any $d_0,d_1$ in the
support of $D$, the difference
$\var(\vec{Y} \mid D=d_1)-\var(\vec{Y} \mid D=d_0) = [\var(X \mid D=d_1)-\var(X
\mid D=d_0)]\vec{\theta}\vec{\theta}'$ should be a rank-1 matrix under linearity
and the maintained independence assumption~\eqref{eqn:hetero_indep}. Other
over-identification tests in more general linear models of identification via
heteroskedasticity are discussed in the review article by \citet{Lewis2024}. We
are not aware of any thorough analysis of the power properties of these tests
against nonlinear alternatives.

\section{The ugly: identification via non-Gaussianity}\label{sec:nongauss}

A second approach to identification in linear models in the absence of direct
shock measures is to assume that the structural shocks are mutually
independent and non-Gaussian; see \citet{Gourieroux2017}, \citet{Lanne2017}, and
the review article by \citet{Lewis2024}. \Citet{Lewbel2024} propose a similar
approach to achieve identification in cross-sectional endogenous regressions in
the absence of external instruments. An earlier literature outside economics
goes by the name \acf{ICA}, see \citet[Chapter 10]{Kagan1973}, \citet{Comon1994},
and the textbook by \citet{Hyvarinen2001}.

This section delivers \emph{ugly} news regarding the sensitivity of this
identification approach to linearity of the structural function: once the
linearity assumption is dropped, the non-Gaussianity assumption is essentially
vacuous; as a consequence, estimators based on non-Gaussianity and linearity of
the structural function can fail spectacularly even under mild departures from
linearity. What is worse, the linearity assumption is untestable in general.

\subsection{Nonparametric version of the identification approach}
As in \Cref{sec:heterosk}, we consider the nonparametric factor
model~\eqref{eqn:factor_nonparam}. However, unlike in the case of identification
via heteroskedasticity, we now do not observe any additional proxy variables
that aid in identifying the latent shocks. Instead, we hope to achieve
identification via restrictions on the distributions of the shocks.

\paragraph{Review of linear identification.}
Assume temporarily that the number of shocks equals the number of observables,
$m=n$, and that the structural function is linear:
\begin{equation*}
  \vec{\psi}(x, \vec{u}) = \vec{\theta}x
  + \vec{\gamma}\vec{u}, \quad \vec{\theta} \in \mathbb{R}^n,\;
  \vec{\gamma} \in \mathbb{R}^{n \times (n-1)}.
\end{equation*}
Assume also that the $n$ shocks $(X, U_1, \dotsc, U_{n-1})$ are mutually independent,
and at most one of these shocks has a Gaussian distribution. We will refer to
this model as the linear \ac{ICA} model. A deep result in probability theory,
the Darmois-Skitovich Theorem, says that two nontrivial linear combinations of
independent variables cannot themselves be independent, unless all the
underlying variables are Gaussian. In the context of the linear \ac{ICA} model,
the theorem implies that any two linear combinations $\vec{\varsigma}'\vec{Y}$
and $\tilde{\vec{\varsigma}}'\vec{Y}$ of the data
$\vec{Y} = \vec{\theta}X + \vec{\gamma}\vec{U}$ can be independent if and only
if these linear combinations equal two different shocks in the model (up to sign
and scale). Hence, the shocks in the model can be identified by searching for
those linear combinations of the observed variables that are independent; once
we have the shocks, we can then estimate their causal effects. See
\citet{Hyvarinen2001} and \citet[Section 4]{Lewis2024} for reviews of estimation
procedures.

The abstract identification argument above can be made less mysterious through a
method of moments framework that exploits implications of shock independence for
higher moments of the data \citep[Section 4.4]{Lewis2024}. Nevertheless, it is
clear from both the abstract argument and the more concrete moment-based
approach that linearity of the structural function is being leveraged heavily.

\subsection{Fragility under nonlinearity}

\paragraph{Nonparametric identified set.}
Unfortunately, the mere assumptions that the latent shocks are independent and non-Gaussian provide essentially no identification power in a nonparametric context. The identified set under these assumptions is so large that nearly any function of the data can be labeled a ``shock''.

\begin{prop}\label{thm:factor}
  Let $\tilde{\vec{Y}} = \vec{\Upsilon}(\vec{Y})$ be a
  homeomorphic\footnote{That is, continuous, one-to-one, and with a continuous
    inverse function $\vec{\Upsilon}^{-1}(\cdot)$.} transformation of $\vec{Y}$,
  with $j$-th element denoted by $\tilde{Y}_j$. For all $j=2,\dotsc, n$, assume
  that the quantile function of $\tilde{Y}_j$ conditional on
  $\tilde{Y}_{j-1}, \tilde{Y}_{j-2}, \dotsc, \tilde{Y}_1$ is continuous in the
  quantile and the conditioning arguments. Define
  $\tilde{X} \equiv \tilde{Y}_1$, and let
  $\lbrace \bar{U}_j \rbrace_{j=1}^{n-1}$ be mutually independent uniform
  variables on $[0,1]$ that are also independent of $\tilde{X}$. Then there
  exists a continuous function
  $\bar{\vec{\psi}} \colon \mathbb{R}^n \to \mathbb{R}^n$ such that the random
  vector
  \begin{equation*}
    \bar{\vec{Y}} \equiv \bar{\vec{\psi}}(\tilde{X}, \bar{U}_1, \dotsc, \bar{U}_{n-1})
  \end{equation*}
has the same distribution as $\vec{Y}$.
\end{prop}
\Cref{thm:factor} shows that the nonparametric factor model~\eqref{eqn:factor_nonparam} is very under-identified, even if we restrict the number of latent, independent shocks to equal $m=n$ and impose smoothness on the structural function $\vec{\psi}$. Indeed, any element of almost any one-to-one transformation $\vec{\Upsilon}(\vec{Y})$ of the observables could be construed as a ``shock'' for \emph{some} data-consistent choice of structural function $\vec{\psi}$. Note that the identification issues go well beyond the familiar ``labeling problem'' in linear ICA analysis, where the shocks are identified up to sign and permutation, such that additional economic information is required to label each of the statistically identified shocks \citep[Section 6.4]{Lewis2024}. While the challenge of identifying nonlinear factor models is well known in the broader literature (see the review by \citealp{Jutten2003}), it appears that the serious consequences of this fact for shock identification in macroeconometrics have not been explored previously.

The fundamental issue is that non-Gaussianity of the shocks is a vacuous
assumption in the nonparametric setting: it is an innocuous normalization to
assume that all shocks have uniform distributions, since we can always
nonlinearly transform any shock distribution to the uniform distribution via the
quantile function. In other words, we have severe identification failure as in
\Cref{thm:factor} even if the econometrician knows the exact distributions of
each shock. Hence, it is no accident that the identification argument in the
\ac{ICA} and structural \ac{VAR} literatures relies heavily on the linearity
assumption: there is no nonlinear equivalent of the Darmois-Skitovich
Theorem.

If we allow for slightly less smoothness of the structural function
$\vec{\psi}$, then the identification problem is even worse. As
\citet{Gunsilius2023} note, any $n$-dimensional vector $\vec{Y}$ can be
represented as a nonlinear factor model~\eqref{eqn:factor_nonparam} in a
\emph{single} latent shock $X$ (so $m=1$ and $\vec{U}=\vec{0}$)
using a so-called space-filling curve (e.g., Hilbert curve) construction, though
the associated $\vec{\psi}$ function would not be one-to-one. Hence,
without restrictions on the structural function, we cannot rule out that the
latent shock of interest $X$ drives \emph{all} the variation in the $n$ observed
variables $\vec{Y}$. Indeed, given a uniform random variable $U$ on $[0,1]$, we
can generate an infinite number of independent uniform random variables
$\{V_{j}\}$ from the decimal expansion of
$U=0.u^{1}_{1}u_{2}^{1}u_{1}^{2}u_{3}^{1}u_{2}^{2}u_{1}^{3}u_{4}^{1}u_{3}^{2}u_{2}^{3}u_{1}^{4}\dotsb$,
and taking $V_{j}\equiv 0.u^{j}_{1}u^{j}_{2}\dotsb$ (and hence an infinite number of
independent random variables with arbitrary distributions $F_{j}$ by taking the
inverse transform $F_{j}^{-1}(V_{j})$).\footnote{This is a consequence of the
  fact that there exists a one-to-one function $\vec{\phi}$ such that both $\vec{\phi}$ and $\vec{\phi}^{-1}$
  are measurable between the measurable spaces $(M, \mathcal{B}_{M})$ and
  $([0,1], \mathcal{B}_{[0,1]})$, where $M$ is any separable complete metric
  space and $\mathcal{B}_{M}$ is the Borel $\sigma$-algebra on $M$
  \citep[Theorem 3.1.1]{dudley02}.}

\paragraph{Sensitivity of linear procedures.}
Due to the nonparametric identification failure, we can expect identification
approaches based on non-Gaussianity to be very sensitive to exact linearity in
the structural function. The following two simple examples illustrate such
sensitivity in two settings where the linearity assumption is untestable. Thus,
identification via non-Gaussianity is not only fragile, it is also generally not
falsifiable.

\begin{exm}
Let the two latent shocks $(X, U)$ have a bivariate standard normal distribution. Let the two observed variables be given by
\begin{equation*}
  Y_1 \equiv X+U, \quad Y_2 \equiv \gamma(X-U),
\end{equation*}
for an arbitrary measurable nonlinear function $\gamma \colon \mathbb{R} \to \mathbb{R}$. We can interpret this setting as being \emph{almost} a linear ICA model, except that the second variable has not been transformed quite correctly. In the above model, $Y_1$ and $Y_2$ are independent, and $Y_2$ has a non-Gaussian distribution.\footnote{This is because the vector $(X+U, X-U)$ has a joint normal distribution with uncorrelated components.} Hence, any linearity-based ICA procedure applied to the data $(Y_1,Y_2)$ will erroneously conclude that the first variable equals the first shock and the second variable the second shock (up to the mean). Moreover, there is nothing in the data that can reject the validity of the linear ICA assumptions. Notice the lack of continuity: even if $\gamma(\cdot)$ is only slightly nonlinear, linear \ac{ICA} procedures will conclude (asymptotically) that the first shock contributes 100\% of the variance of $Y_1$, even though the true number is 50\%.

This example illustrates how getting the transformation of $Y_2$ slightly wrong can mess up causal inference about the other variable $Y_1$ (which is in fact linear in the true shocks).
\end{exm}

\begin{exm}\label{exm:nongauss_counter2}
Consider a model of the form
\begin{equation*}
  Y_1 = X+U, \quad Y_2 = X + \gamma(U),
\end{equation*}
where $X$ and $U$ are independent latent shocks. \Cref*{app:nongauss_counter2}
gives concrete choices of non-Gaussian distributions of the shocks and a smooth
$\gamma(\cdot)$ function such that $Y_1$ and $Y_2$ are independent and both
non-normal. Hence, as in the previous example, any linearity-based ICA procedure
applied to the data $(Y_1,Y_2)$ will erroneously attribute all variation in
$Y_1$ to the first shock and all variation in $Y_2$ to the second shock. Note
that in this example, both of the true shocks $(X, U)$ are non-Gaussian, and the
only nonlinearity in the true structural model is the relationship between
$Y_2$ and $U$.
\end{exm}

In conclusion, linearity-based \ac{ICA} identification procedures can be highly
misleading under departures from a linear model, as with identification via
heteroskedasticity (but unlike identification with observed shocks or proxies).
In fact, the situation is arguably worse than in \Cref{sec:heterosk}, since even
arbitrarily small structural nonlinearities can yield large biases, and the
linearity assumption is not testable in general.

\section{Identification of average marginal effects}\label{sec:mte}

In \Cref{sec:shock}, we reverse-engineered the weight function allowing us to
interpret the linear local projection estimand as an average marginal effect. We
now consider a \emph{forward-engineering} problem: how can we estimate an average marginal effect with a \emph{pre-specified} weight
function? For simplicity, we focus on the case when the shock of interest is
observed, but we drop the requirement that the shock is continuously
distributed. We first consider the case without control variables before
extending the analysis to allow for controls.

\subsection{Identification without controls}\label{sec:ident-with-contr-2}

We consider the setup from \Cref{sec:obsshock}, but drop time subscripts to make
it clearer that our analysis applies to cross-sectional as well as time series
settings. Indeed, part of our goal is to show that identification arguments from
cross-section settings carry directly through to time-series contexts. Let
\begin{equation}\label{eq:conditional_mean}
  g(x)\equiv E[Y\mid X=x]
\end{equation}
denote the conditional mean function from a nonparametric regression of the scalar outcome $Y$
onto the scalar variable $X$. We do not restrict the marginal distribution of $X$: it can be continuous, discrete, or mixed. Let $I\subseteq\mathbb{R}$ denote a (possibly unbounded)
interval that contains the support of $X$. We are
interested in summarizing $g$ by reporting its weighted average derivative,
weighted by some pre-specified weight function $\omega$. With some abuse of
notation, we still denote this weighted average derivative by
\begin{equation*}
  \theta(\omega) \equiv \int_{I}\omega(x)g'(x)\, dx,
\end{equation*}
as in \Cref{sec:framework}, even though we don't require that $g(x)$
corresponds to some structural function. To ensure that this object is
well-defined, we assume that $g$ is locally absolutely continuous on $I$.
Since~\eqref{eq:conditional_mean} only defines $g$ on the support of $X$, this
requires us to extend $g$ to all of $I$ in cases when there are gaps in the
support of $X$, such as when $X$ is discrete. This can be done by linear
interpolation: if $P(X\in (a, b))=0$ for some $(a, b)\subseteq I$, we set
$g(x)=(g(b)-g(a))(x-a)/(b-a)+g(a)$ for $x\in(a, b)$. If the distribution of $X$
is discrete, this defines the derivative $g'$ as the slope between adjacent
support points (and the extension will automatically be locally absolutely
continuous provided that the spacing between adjacent support points is bounded away from 0).

A regression-based approach to estimating $\theta(\omega)$ first estimates the
entire derivative function $g'(\cdot)$ nonparametrically (by, say, series or
kernel regression), and then averages it using the weights $\omega$. The next
result shows that we can alternatively estimate $\theta(\omega)$ as a weighted
average of outcomes, $\theta(\omega)=E[\alpha(X)Y]$, where $\alpha$ is the Riesz
representer of the linear functional $g\mapsto \theta(\omega)$.

\begin{lem}\label{theorem:rr}
  Let $\omega(x) \equiv E[\1{X\geq x}\alpha(X)]$. Suppose that (i) the support of $X$
  is contained in a (possibly unbounded) interval $I\subseteq \mathbb{R}$; (ii)
  $g$ is locally absolutely continuous on $I$; (iii)
  $E[\abs{\alpha(X)}(1+\abs{g(X)})]<\infty$ with $E[\alpha(X)]=0$; and
  (iv) there exists $x_{0}\in I$ such that
  $E[\abs{\alpha(X)\int_{x_{0}}^{X}\abs{g'(x)}\, dx}]<\infty$. Then
  \begin{equation}\label{eq:riesz_representer}
    E[\alpha(X)g(X)]=\int_{I} \omega(x)g'(x)\, dx.
  \end{equation}
\end{lem}

Analogous representations for $\theta(\omega)$ are well-known in the literature if we
additionally assume that $X$ is continuously distributed \citep[e.g.,][equation 2.6]{NeSt93}. The representation is usually derived by
directly applying integration by parts. Our proof instead generalizes the proof
of Stein's lemma \citep[Lemma 1]{stein81}, allowing us to drop the requirement that $X$ is continuously distributed and impose only very mild regularity conditions, which essentially
just require that both sides of \cref{eq:riesz_representer} are
well-defined. In particular, absolute continuity of $g$ is needed to ensure that
$\theta(\omega)$ is well-defined, and in \Cref*{lemma:finite_integral_omega} in
\Cref*{app:proofs}, we show that if the tails of $\omega$ or the tails of $g$ are
monotone, then condition (iv) of \Cref{theorem:rr} holds provided the integral on the right-hand
side of~\eqref{eq:riesz_representer} exists.

\Cref{theorem:rr} gives a recipe for constructing weighting-based estimators of
$\theta(\omega)$ for particular choices of weight function $\omega$ by replacing
the expectation in \cref{eq:riesz_representer} with a sample average and, if the
function $\alpha$ is unknown, replacing $\alpha$ with an estimate.\footnote{As
  we discuss in \Cref{sec:ident-with-contr-1} below, recent results in the
  semiparametric literature suggest that rather than picking between this
  weighting-based approach and the regression-based approach to estimation of
  $\theta(\omega)$, it may pay off to combine them, yielding a ``doubly-robust''
  estimator.} For instance, suppose $X$ is continuous, and let
$\omega(x)=f_{X}(x)$ correspond to the density of $X$, so that
$\theta(\omega)=E[g'(X)]$ is the (unweighted) average derivative. Then the
required weighting is given by $\alpha(x)=-f_{X}'(x)/f_{X}(x)$, leading to the
estimator of \citet{hardle1989} if one uses kernel estimators to estimate the
density and its derivative. If the identification
condition~\eqref{eqn:obsshock_identif} holds, this estimator identifies the
average causal impact of increasing $X$ by an infinitesimal amount. To estimate
the average impact of increasing $X$ by a fixed amount $\delta$ (i.e., the
unweighted average causal effect), which corresponds to setting
$\omega(x)=\frac{1}{\delta}\int_{x-\delta}^{x}f_{X}(x)\, dx$, let
$\alpha(x)=-\frac{f_{X}(x)-f_{X}(x-\delta)}{\delta f_{X}(x)}$, replacing the
derivative of the density by a discrete change. If we set
$\omega(x)=f_{X}^{2}(x)$, so that $\alpha(x)=-2f_{X}'(x)$, and we use a
leave-one-out kernel estimator for the derivative of the density, we recover the
famous density-weighted average derivative estimator of \citet{PoStSt89}.
Finally, for $\omega(x)=E[\1{X\geq x}X]$, we get $\alpha(x)=x$, and
\Cref{theorem:rr} reduces to \Cref{thm:obsshock}: this weighting scheme can be
estimated by linear regression. As the proofs of
\Cref{thm:quad_reg,thm:proxy,thm:hetero_iv} reveal, these results are also
special cases of \Cref{theorem:rr}.\footnote{As a consequence, the assumption
  that $X_{t}$ be continuous in \Cref{thm:obsshock,thm:proxy} can be dropped.}

\subsection{Identification with control variables}\label{sec:ident-with-contr-1}

We now generalize the setup to allow for a vector of controls $\vec{W}$.
Consider the weighted average derivative
\begin{equation*}
\theta(\omega) \equiv E\left[\int_{I_{\vec{W}}} \omega(x, \vec{W})g'(x, \vec{W})\, dx\right],
\end{equation*}
where the expectation is over the marginal distribution of $\vec{W}$, and $g'$
is the derivative with respect to $x$ of the conditional mean function
$g(x, \vec{w}) \equiv E[Y\mid X=x, \vec{W}=\vec{w}]$.\footnote{Weighting by the
  marginal distribution of $\vec{W}$ is not restrictive, since weighting schemes
  that use other forms of averaging across $\vec{w}$ can be recovered by
  defining $\omega$ appropriately.} To ensure this object is well-defined, we
assume that for each $\vec{w}$, the weights $\omega$ are zero outside the
interval $I_{\vec{w}}$ containing the conditional support of $X$ given
$\vec{W}=\vec{w}$, and that we can extend $g(\cdot, \vec{w})$ to $I_{\vec{w}}$
such that $g(\cdot, \vec{w})$ is locally absolutely continuous on $I_{\vec{w}}$,
such as by linearly interpolating across any gaps in the support.

Like in the case without covariates, a regression-based estimator of
$\theta(\omega)$ first estimates the derivative of the regression function
$g'(x, \vec{w})$, and then averages the estimated derivative function
using the weights $\omega$ and the marginal distribution of the covariates.
The next result shows that we can alternatively estimate $\theta(\omega)$ by
taking weighted averages of the outcome.

\begin{lem}\label{theorem:rr-covariates}
  Let
  $\omega(x, \vec{w}) \equiv E[\1{X\geq x}\alpha(X, \vec{W})\mid
  \vec{W}=\vec{w}]$. Suppose that conditional on $\vec{W}$, the following holds
  almost surely: (i) the support of $X$ is contained in a (possibly unbounded)
  interval $I_{\vec{W}}\subseteq \mathbb{R}$; (ii) $g(\cdot, \vec{W})$ is locally
  absolutely continuous on $I_{\vec{W}}$; and (iii)
  $E[\alpha(X, \vec{W})\mid \vec{W}]=0$. Suppose also that (iv) there exists a
  function $x_{0}(\vec{W})\in I_{\vec{W}}$ such that
  $E[\abs{\alpha(X, \vec{W}) \int_{x_{0(\vec{W})}}^{X}
    \abs{g'(x, \vec{W})}\, dx}]<\infty$; and that (v)
  $E[\abs{\alpha(X, \vec{W})}(1+\abs{g(X, \vec{W})})]<\infty$. Then
   \begin{equation}\label{eq:rr-covariates}
    E\left[\int_{I_{\vec{W}}}\omega(x, \vec{W})g'(x, \vec{W})\, dx\right]=
    E\left[\alpha(X, \vec{W})g(X, \vec{W})\right].
  \end{equation}
\end{lem}

As discussed in~\Cref{sec:ident-with-contr-2}, the
representation~\eqref{eq:rr-covariates} is well-known if the distribution of $X$
is continuous conditional on $\vec{W}$. The novelty of
\Cref{theorem:rr-covariates} is to drop the continuity requirement and relax
the regularity conditions.

If $X\in\{0,1\}$ is a binary treatment variable, and we additionally assume that
$X$ is as good as randomly assigned conditional on $\vec{W}$, then
$g(1,\vec{w})-g(0,\vec{w})$ corresponds to the conditional \ac{ATE} for
individuals with $\vec{W}=\vec{w}$. In this case, the average derivative
simplifies to
\begin{align*}
  \theta(\omega)&=E\left[(g(1,\vec{W})-g(0,\vec{W}))
                  \int_{I_{\vec{W}}} \omega(x, \vec{W})\, dx\right] \\
                &=E[(g(1,\vec{W})-g(0,\vec{W}))\alpha(1,\vec{W})P(X=1\mid \vec{W})],
\end{align*}
which corresponds to a weighted average of conditional \acp{ATE}. By letting
$\alpha(X, \vec{W})=X/P(X=1\mid \vec{W})-(1-X)/P(X=0\mid \vec{W})$,
\Cref{theorem:rr-covariates} recovers the classic result that we can estimate
the (unweighted) \ac{ATE} by inverse probability weighting. If $X$ is continuous
with density $f_{X}(x\mid \vec{W})$ conditional on $\vec{W}$, letting
$\alpha(x, \vec{W})=-f_{X}'(x\mid \vec{W})/f_{X}(x\mid \vec{W})$
recovers the average derivative $E[g'(X, \vec{W})]$.

For both of these special cases, there is a wealth of papers studying how to
best implement regression-based or weighting-based approaches to estimating
$\theta(\omega)$, or combinations of both. Recent influential results in the
cross-sectional literature \citep[e.g.,][]{ceinr22,ccddhnr18} highlight the
advantages of combining both approaches using the Neyman orthogonal moment
condition
\begin{equation}\label{eq:doubly_robust_moment}
  \theta(\omega)= E[\mu(X, \vec{W}, g)
  +\alpha(X, \vec{W})(Y-g(X, \vec{W}))],
\end{equation}
where $\mu(X, \vec{W}, g)=g(1,\vec{W})-g(0,\vec{W})$ for the \ac{ATE} and
$\mu(X, \vec{W}, g)=g'(X, \vec{W})$ for the average derivative. This moment
condition is orthogonal in the sense that it is insensitive to small
perturbations in $g$, in contrast to the regression-based moment condition
$\theta(\omega)= E[\mu(X, \vec{W}, g)]$. As a result, an orthogonal
method-of-moments estimator based on~\eqref{eq:doubly_robust_moment} that plugs
in first-stage estimates of $g$ and $\alpha$ can be viewed as a debiased version
of the plug-in estimator utilizing the regression-based moment
condition. Actually, the moment condition~\eqref{eq:doubly_robust_moment} is not
only orthogonal, but also \emph{doubly robust}---insensitive to large
perturbations in either $g$ or $\alpha$ so that the orthogonal method-of-moments
estimator remains consistent so long as any \emph{one} of the first-stage
estimators is consistent for $\alpha$ or $g$, even if the other estimator is
inconsistent. For the binary treatment case, the orthogonal method-of-moments
estimator corresponds to the classic augmented inverse probability weighted
estimator of \citet{rrz94}.

For i.i.d.\ data, it is popular to combine the orthogonal moment condition with
cross-fitting \citep{ceinr22,ccddhnr18}.\footnote{Take a sample sum of the
  moment condition~\eqref{eq:doubly_robust_moment} over the first half of the
  sample, plugging in estimates $\hat{\alpha}_{2}$ and $\hat{\gamma}_{2}$ of
  $\alpha$ and $\gamma$ constructed using the second half of the sample, and add
  to it a sample sum of the moment condition over the second half of the sample,
  but where we plug in estimates $\hat{\alpha}_{1}$ and $\hat{\gamma}_{1}$ from
  the \emph{first} half of the sample.} This allows for regularity conditions
that are weak enough to accommodate a variety of first-step estimators of
$\alpha$ and $\gamma$, including kernels, series, as well as lasso, random
forests, or other machine learning estimators, provided that these estimators
converge sufficiently fast. Because of this flexibility, the approach is known
as debiased machine learning. \Citet{cns22} and \citet{HiWa21}
develop alternatives to this approach that bypass the need to explicitly
estimate the Riesz representer $\alpha$.

These approaches all deliver estimators of $\theta(\omega)$ that converge, under
appropriate regularity conditions, at the usual parametric rate (square root of
sample size) even if the first-stage estimators are based on complicated
nonparametric or machine learning algorithms. Recent work by \citet{goncalves24nonparametric},
\citet{BaWe24}, and \citet{ballarin2025} aims to adapt these
semiparametric approaches to time series contexts. One
potential worry is that even in the absence of covariates, given the small
sample sizes typically available in macroeconomic applications, estimates of
average marginal effects relying on machine-learning or nonparametric first-step
estimates of the shock density and the structural function may yield estimates
that are too noisy and sensitive to the choice of first-stage tuning parameters.
When covariates are needed to argue that the observed variable $X$ is exogenous,
the data requirements become even more severe.

The practical challenges associated with fully nonparametric estimation
motivates studying what the simple \ac{OLS} local projection~\eqref{eqn:lp_reg}
estimates when the true regression function is nonlinear. Extending the analysis
of \Cref{sec:ident-with-contr}, we now allow the researcher to control for
covariates more flexibly by considering the partially linear regression
\begin{equation*}
  Y=X\beta+\gamma(\vec{W})+\text{residual},\quad\text{where}\quad\gamma\in\Gamma,
\end{equation*}
and $\Gamma$ is a linear space of control functions that contains the constant
function $1$. This covers the case with a linear adjustment by letting
$\Gamma=\{a+\vec{w}'\vec{b}\colon
a\in\mathbb{R}, \vec{b}\in\mathbb{R}^{\dim(\vec{w})}\}$ be the class of linear
functions of $\vec{w}$, as well as the semiparametric partially linear model
that lets $\Gamma$ be a large class of ``nonparametric'' functions. By the
projection theorem, the estimand $\beta$ in this regression is given by
\begin{equation}\label{eq:ols-covariates}
  \beta \equiv \frac{E[(X-\pi(\vec{W}))g(X, \vec{W})]}{\var(X-\pi(\vec{W}))},
\end{equation}
where
$\pi(\vec{W}) \equiv
\argmin_{\gamma_{0}\in\Gamma}E[(X-\gamma_{0}(\vec{W}))^{2}]$ denotes the
projection of $X$ onto $\Gamma$ (if $X$ and $\vec{W}$ are independent, the
estimand~\eqref{eq:ols-covariates} reduces to that in~\eqref{eqn:lp_estimand};
we assume the projection exists). We denote the true conditional expectation
(the propensity score, if $X$ is binary) by
$\pi^{*}(\vec{W})\equiv E[X\mid \vec{W}]$.

The next result uses \Cref{theorem:rr-covariates}
to generalize \Cref{thm:obsshock} to the case with covariates.
\begin{prop}\label{theorem:ols_covariates}
  Let $\omega^{*}(x, \vec{W}) \equiv E[\1{X\geq x}(X-\pi^{*}(\vec{W}))\mid \vec{W}]$, and
  suppose $X$ has finite second moments, and that $\var(X-\pi(\vec{W}))>0$. Suppose
  that either (a) $\pi=\pi^{*}$; or else (b) for some
  $\gamma_{0}, \gamma_{1}\in\Gamma$, and some weights $\lambda(x, \vec{w})$
  such that
  $\int \lambda(x, \vec{w})\, dx=\pi^{*}(\vec{w})+\gamma_{1}(\vec{w})$,
  $E[g(X, \vec{W})\mid \vec{W}=\vec{w}]=\gamma_{0}(\vec{w})+ \int \lambda({x},
  \vec{w}) g'(x, \vec{w})\, dx$ for almost all $\vec{w}$.

  Furthermore, assume that conditional on $\vec{W}$, the following holds
  almost surely: (i) the support of $X$ is contained in a (possibly unbounded)
  interval $I_{\vec{W}}\subseteq \mathbb{R}$; and (ii) $g(\cdot, \vec{W})$ is
  locally absolutely continuous on $I_{\vec{W}}$. Finally, assume that (iii)
  $E[\int\abs{\omega^{*}(x, \vec{W})g'(x, \vec{W})}\, dx]<\infty$ and
  $E[\abs{g(X, \vec{W})(X-\pi^{*}(\vec{W}))}]<\infty$.
  Then the estimand~\eqref{eq:ols-covariates} satisfies
   \begin{equation*}
     \beta=\theta(\omega), \quad \text{where} \quad
     \omega(x, \vec{W}) \equiv
     \frac{\omega^{*}(x, \vec{W})+ (\pi^{*}(\vec{W})-\pi(\vec{W})) \lambda({x}, \vec{W})}{
       \var(X-\pi(\vec{W}))},
  \end{equation*}
  where, if condition (a) holds, we let $\lambda(x, \vec{w})=0$.

  The weights integrate to one: $E[\int \omega(x, \vec{W})\, dx]=1$. A sufficient
  condition for the weights to be non-negative is that condition (a) holds, in
  which case $\omega^{*}(x, \vec{w})$ is hump-shaped as a function of $x$ for
  almost all $\vec{w}$: monotonically increasing from $0$ to its maximum for
  $x=\pi^{*}(\vec{w})$, and then monotonically decreasing back to $0$.
\end{prop}

To interpret this result, it is useful to first consider the case where the class $\Gamma$ of control
functions is rich enough so that condition (a) holds. This is trivially the case,
for instance, if $\vec{W}$ just consists of a single set of fixed effects. In
this case, we obtain an analogue of \Cref{thm:obsshock}: (partially) linear
regression identifies a weighted average of marginal effects, with hump-shaped
weights---this holds regardless of the distribution of $X$, be it discrete,
continuous or mixed. If the conditional distribution of $X$ given $\vec{W}$
is continuous, then the weighting
$\omega(x, \vec{W})=\omega^{*}(x, \vec{W})/\var(X-\pi^{*}(\vec{W}))$
varies smoothly with $x$; if there are mass points, as in the case of a discrete
or mixed distribution, then the weight function jumps discontinuously at the mass
points. If the treatment $X$ is discrete with support $0, 1, \dotsc$, we recover
the result in \citet{Angrist1999} that regression estimates a weighted average
of the causal effects of increasing $X$ by one unit,
\begin{equation*}
  \beta=\frac{E\left[\sum_{s=1}^{\infty} E[\1{X\geq s}(X-\pi^{*}(\vec{W}))\mid \vec{W}]
      (g(s, \vec{W})-g(s-1,\vec{W}))\right]}{
    E\left[\sum_{s=1}^{\infty} E[\1{X\geq s}(X-\pi^{*}(\vec{W}))\mid \vec{W}]\right]
  }.
\end{equation*}

Now suppose that condition (a) is violated, because the specification for
$\Gamma$ is not sufficiently flexible to model the true propensity score
$\pi^{*}(\vec{W})$. In general, this may lead to omitted variable bias, as
the partially linear model may not be sufficiently flexible to account for all
confounding due to $\vec{W}$. Condition (b) prevents this scenario, ensuring
that any bias due to confounding is accounted for. As a simple example of when
the condition holds, consider the case where the true conditional mean
function has a multiplicative form:
$g(X, \vec{W})=X g'(\vec{W})+\gamma_{0}(\vec{W})$, with
$\gamma_{0}\in\Gamma$. The partially linear model is misspecified, because the
marginal effect is not constant, but varies with $\vec{W}$. But condition (b)
holds with $\gamma_{1}(\vec{w})=0$ and
$\lambda(x, \vec{w})=\pi^{*}(\vec{w})\varphi(x)$, where $\varphi(x)$ is an
arbitrary density function. Because the marginal effect varies only with
$\vec{W}$ but not with $X$, \Cref{theorem:ols_covariates} simplifies to
\begin{equation*}
  \beta= \frac{E\left[(E[X^{2}\mid \vec{W}]-\pi(\vec{W})\pi^{*}(\vec{W}))g'(\vec{W})\right]}{\var(X-\pi^{*}(\vec{W}))}.
\end{equation*}
If $\pi=\pi^{*}$, we obtain a generalization of the \citet{angrist98} result for binary treatments: the
weights are proportional to the conditional variance of $X$,
$\var(X\mid \vec{W})$. But if $\pi\neq \pi^{*}$, the weight function may be
negative for some values of $\vec{W}$.
Consider, for instance, a panel data scenario where $\Gamma$ is linear, and
$\vec{W}$ consists of unit and time fixed effects. Then the assumption
$\gamma_{0}\in\Gamma$ amounts to a parallel trends assumption: in the absence of
the treatment, the average differences in outcomes for different units are
constant and do not depend on the time period. If the treatment effects are
heterogeneous, so that $g'(\vec{W})$ depends on the unit and time period,
then this two-way fixed effects regression still estimates a weighted average of
marginal effects, but with weights that are negative if
$E[X^{2}\mid \vec{W}]<\pi(\vec{W})\pi^{*}(\vec{W})$. In the context of
a binary treatment and two-way fixed effects regressions, this result has been
noted in \citet{dCDH20} and \citet{gb21}. \Cref{theorem:ols_covariates}
generalizes this to an arbitrary treatment distribution and a general regression
specification.

Another example where condition (b) of \Cref{theorem:ols_covariates} holds is when $X$ is bounded below by some baseline value (say, 0), $X\geq 0$, and the baseline outcome model is correctly specified, $g(0,\vec{W})\in\Gamma$. Then condition (b) holds
  with $\gamma_{0}(\vec{W})=g(0,\vec{W})$, $\gamma_{1}=0$, and
  $\lambda(x, \vec{W})=P(X\geq x\mid \vec{W})$. In this case, the weights
  simplify to
  $\omega(x, \vec{W})=E[\1{X\geq x}(X-\pi(\vec{W}))\mid \vec{W}]/\var(X-\pi(\vec{W}))$.

  Thus, if the marginal effects are constant, the partially linear model is
  doubly robust: the regression estimand is consistent for this constant
  treatment effect so long as either $\pi^{*}\in\Gamma$ or
  $g(0,\vec{W})\in\Gamma$, as noted, for instance, in \citet{RoMaNe92}. But this
  double robustness doesn't fully extend to the case with heterogeneous marginal
  effects. If the researcher gets the model for $X$ right, in the sense that
  $\pi=\pi^{*}$, then the partially linear regression estimates an average
  marginal effect. However, if the researcher gets it wrong, and only gets the
  outcome model under no treatment right, so that only condition (b) of
  \Cref{theorem:ols_covariates} holds, then weights on some of the true marginal
  effects $g'(X, \vec{W})$ may be negative, risking a sign-reversal. This
  asymmetry has been noted by \citet{GoldsmithPinkham2024} for the case with a
  binary treatment $X$.\footnote{\citet{GoldsmithPinkham2024} also show that in
    regressions on multiple mutually exclusive treatment indicators, the
    regression estimand on a given treatment contains an additional
    contamination bias term corresponding to non-convex average effects of the
    other treatments.} \Cref{theorem:ols_covariates} shows that the result is
  general. The upshot of this asymmetry is that in cases where the treatment
  variable $X$ is only conditionally exogenous---whether the data is
  cross-sectional, panel, or time series---it pays off to conduct a sensitivity
  analysis with respect to the functional form of the control specification, in
  addition to the standard sensitivity analysis with respect to the set of
  controls.

\section{Conclusion}\label{sec:concl}

We have shown that conventional linear methods for identifying causal effects in applied time series analysis based on observed shocks or proxies are robust to misspecification: they estimate a positively weighted average of the true nonlinear causal effects, irrespective of the extent of nonlinearities in the underlying \ac{DGP}. By contrast, identification approaches that exploit heteroskedasticity or non-Gaussianity of latent shocks are highly sensitive to violations of the assumed linear functional form of the structural model. Moreover, while linear identification via heteroskedasticity provides some testable restrictions, identification via non-Gaussianity is generally unfalsifiable despite the potential for severe biases.

Our results suggest that it is worthwhile for applied researchers who are interested in average marginal effects to expend the effort involved in constructing direct measures of shocks, or at least proxies that are credibly (approximately) monotonically related to the latent shock of interest. Shock measurement via narrative approaches or detailed institutional knowledge is admittedly highly work-intensive and in many applications may be practically impossible. Nevertheless, when the strategy is feasible, it affords an insurance against functional form misspecification which is not matched by the other identification approaches that we analyze. This is similar to the robustness of linear regression for estimating average treatment effects in randomized control trials relative to estimators that rely on parametric adjustment for nonlinearities or selection in observational data. Our results do not directly speak to other identification approaches like identification via long-run or sign restrictions, and we leave these for future work.\footnote{As a referee pointed out to us, it may be possible to analyze long-run restrictions by extending \cref{thm:proxy} to apply to the instrumental variable specification in \citet{Shapiro1988}. However, since the derivations in the latter paper rely on linearity of the structural function, it is not immediately clear what assumptions would guarantee convex weights.}

When reporting impulse responses from linear specifications with observed
shocks, we recommend that researchers routinely report the implicit weight
function, which is easy to compute via standard regression software. If the
weight function associated with conventional local projections or \acp{VAR} is
deemed to be unattractive, one can target other causal summaries as discussed in
\Cref{sec:mte}, though further analysis is required on the best practices for
doing so in macroeconomic applications.

More broadly, we hope that our paper will boost the agenda spearheaded by \citet{White2006}, \citet{White2009},
\citet{Angrist2011}, \citet{Angrist2018}, \citet{Rambachan2021}, and \citet{goncalves24nonparametric,Goncalves2024} that seeks
to draw lessons for macroeconometrics from the microeconometric treatment effect
literature. The nonparametric framework used in the treatment effect literature
contains useful lessons for empirical work in macroeconomics, even though conventional approaches to nonparametric estimation or debiased machine learning methods are impractical due to the much smaller data sets typical in macroeconomics.

\clearpage%
\phantomsection%
\addcontentsline{toc}{section}{References}
\bibliography{ref}

\end{document}


\title{\texorpdfstring{\vspace{-1.5\baselineskip}}{} Online Supplement for ``Dynamic Causal Effects in a
  Nonlinear World: the Good, the Bad, and the Ugly''} \author{Michal Koles\'{a}r \\
  Princeton University \and Mikkel Plagborg-M{\o}ller \\ Princeton University}
\date{\texorpdfstring{\bigskip}{ }\today}
\maketitle

\begin{appendices}
\crefalias{section}{sappsec}
\crefalias{subsection}{sappsubsec}
\crefalias{subsubsection}{sappsubsubsec}

\section{Further results}\label{app:results}

\subsection{Integrals of the weight function}\label{app:weight_integr}

The following lemma provides an identification result for integrals of the
causal weight function $\omega_{X}$ defined in \Cref*{sec:obsshock}.

\begin{lem}\label{thm:weight_integral}
Let $\omega_X$ be given by (\ref*{eqn:obsshock_weights}). Assume that $E[X_t^2]<\infty$. Let $\underline{x}, \overline{x}$ be constants such that $-\infty \leq \underline{x} < \overline{x} \leq \infty$. Then
\begin{equation*}
  \int_{\underline{x}}^{\overline{x}} \omega_X(x)\, dx =
  \frac{\cov\left(\max\lbrace \min\lbrace X_t, \overline{x} \rbrace
  , \underline{x} \rbrace, X_t\right)}{\var(X_t)}.
\end{equation*}
\end{lem}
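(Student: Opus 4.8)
The plan is to use the representation of the causal weight function recorded in (\ref*{eqn:obsshock_weights}), namely $\omega_X(x) = \cov(X_t, \1{X_t \ge x})/\var(X_t) = E[(X_t - E[X_t])\1{X_t \ge x}]/\var(X_t)$, and to reduce the claim to a deterministic identity after interchanging the integral over $x$ with the expectation. Since $\var(X_t)$ is a fixed positive constant appearing on both sides, it suffices to prove $\int_{\underline x}^{\overline x}\cov(X_t, \1{X_t \ge x})\,dx = \cov\big(\max\{\min\{X_t,\overline x\},\underline x\},\,X_t\big)$; I assume $\underline x < \overline x$, since otherwise both sides vanish. The crucial preliminary observation is that for any fixed finite $b$ the quantity $\1{b \ge x}$ is non-random, so $\cov(X_t, \1{X_t \ge x}) = \cov\big(X_t, \1{X_t \ge x} - \1{b \ge x}\big)$ for every $x$; I fix such a $b$ with $\underline x \le b \le \overline x$ (for instance $b = 0$ if $0 \in [\underline x, \overline x]$, and otherwise any finite point of $[\underline x, \overline x]$).

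The next step is to move the $x$-integral inside the expectation. The map $x \mapsto \1{X_t \ge x} - \1{b \ge x}$ is supported on the set of $x$ lying between $b$ and $X_t$ and is bounded by $1$ in absolute value, so $\int_{\underline x}^{\overline x}\lvert \1{X_t \ge x} - \1{b \ge x}\rvert\,dx \le \lvert X_t - b\rvert$; hence $E\big[\lvert X_t - E[X_t]\rvert \int_{\underline x}^{\overline x}\lvert \1{X_t \ge x} - \1{b \ge x}\rvert\,dx\big] \le E\big[\lvert X_t - E[X_t]\rvert\,\lvert X_t - b\rvert\big] < \infty$ by the Cauchy--Schwarz inequality and $E[X_t^2] < \infty$. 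Fubini's theorem therefore applies and gives $\int_{\underline x}^{\overline x}\cov(X_t, \1{X_t \ge x})\,dx = \cov\big(X_t, \int_{\underline x}^{\overline x}(\1{X_t \ge x} - \1{b \ge x})\,dx\big)$.

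It remains to evaluate the inner integral pointwise. On the event $\{X_t \ge b\}$ one has $\1{X_t \ge x} - \1{b \ge x} = \1{b < x \le X_t}$, and on $\{X_t < b\}$ one has $\1{X_t \ge x} - \1{b \ge x} = -\1{X_t < x \le b}$; integrating each over $[\underline x,\overline x]$ and using $\underline x \le b \le \overline x$ shows, in both cases, that $\int_{\underline x}^{\overline x}(\1{X_t \ge x} - \1{b \ge x})\,dx = \max\{\min\{X_t,\overline x\},\underline x\} - b$. Substituting this into the previous display and discarding the additive constant $-b$ (which leaves the covariance unchanged) yields $\int_{\underline x}^{\overline x}\cov(X_t, \1{X_t \ge x})\,dx = \cov\big(X_t, \max\{\min\{X_t,\overline x\},\underline x\}\big)$, and dividing by $\var(X_t)$ completes the proof.

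I expect the only genuine subtlety to be the interchange of integral and expectation when $\underline x = -\infty$ or $\overline x = \infty$: applied directly to $\1{X_t \ge x}$ alone, Tonelli's theorem would produce the divergent inner integral $\int_{\underline x}^{\overline x}\1{X_t \ge x}\,dx$, so the re-centering by the non-random term $\1{b \ge x}$ is exactly what makes the double integral absolutely convergent and lets the argument go through uniformly in the (possibly infinite) limits. Everything else is linearity of the covariance and a short case check of a deterministic identity.
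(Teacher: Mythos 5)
Your proof is correct and follows essentially the same route as the paper's: interchange the $x$-integral with the covariance via Fubini, then evaluate $\int_{\underline{x}}^{\overline{x}}\1{X_t\ge x}\,dx$ pointwise as the censored variable $\max\{\min\{X_t,\overline{x}\},\underline{x}\}$ up to an additive constant that drops out of the covariance. Your one refinement---re-centering by $\1{b\ge x}$ for a finite $b\in[\underline{x},\overline{x}]$ rather than by $\underline{x}$---is a genuine improvement in rigor, since it keeps the double integral absolutely convergent and the inner integral finite even when $\underline{x}=-\infty$, a case where the paper's own identity reads $\infty=\infty$ and strictly requires exactly the fix you supply.
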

\begin{proof}
By Fubini's theorem and linearity of the covariance operator,
\begin{equation*}
  \int_{\underline{x}}^{\overline{x}}
  \cov(\1{X_t \geq x}, X_t)\, dx = \cov\left(\int_{\underline{x}}^{\overline{x}}\1{X_t \geq x}\, dx, X_t\right).
\end{equation*}
Considering separately the three cases $X_t < \underline{x}$,
$X_t \in [\underline{x}, \overline{x}]$, and $X_t>\overline{x}$, it can be
verified that
\begin{equation*}
  \int_{\underline{x}}^{\overline{x}}\1{X_t \geq x}\, dx = \max\lbrace
  \min\lbrace X_t, \overline{x} \rbrace, \underline{x} \rbrace - \underline{x}. \qedhere
\end{equation*}
\end{proof}
\noindent Note that the lemma holds even if $X_t$ has a discrete distribution
(e.g., the empirical distribution). It implies in particular that the
OLS-estimated weight function discussed in \Cref*{sec:obsshock} integrates to 1
across all $x \in \mathbb{R}$ in finite samples.

\subsection{Identification with instruments under
  endogeneity}\label{app:ident-with-instr}

We now generalize the setup in \Cref*{sec:proxy} by allowing $X_{t}$ to be
endogenous and incorporating covariates $\vec{W}_{t}$. In particular, we
retain the nonparametric structural model~(\ref*{eqn:causal}), but drop the
independence assumption~(\ref*{eqn:indep}) and the selection-on-observables
assumption~(\ref*{eqn:selection_obs}). Let
\begin{equation}\label{eq:first_stage}
  X_{t}=\xi(Z_{t}, \vec{W}_{t}, \tilde{V}_{t})
\end{equation}
denote the first-stage equation, with $\tilde{V}_{t}$ corresponding to the
unobservable determinants of $X_{t}$. To accommodate a variety of alternatives
to the classic \citet{ImAn94} monotonicity assumption, we follow
\citet{strlb17} and suppose there is a vector $\vec{V}_{t}$ (not necessarily
observable) containing the covariates $\vec{W}_{t}$ that is a sufficient
statistic for endogeneity in the sense that
\begin{equation}\label{eq:indep}
  E[\psi_{h}(x, \vec{U}_{h, t+h})\mid X_{t}, Z_{t}, \vec{V}_{t}]
  =\Psi_{h}(x, \vec{V}_{t}),
\end{equation}
where
\begin{equation*}
  \Psi_{h}(x, \vec{v}) \equiv E[\psi_{h}(x, \vec{U}_{h, t+h})\mid \vec{V}_{t}=\vec{v}]
\end{equation*}
denotes the \emph{marginal treatment response function}. We also make the
exclusion restriction that the expectation of $Z_{t}$ conditional on
$\vec{V}_{t}$ depends only on $\vec{W}_{t}$. To ensure it is sufficient to
control for the covariates linearly, we further assume the conditional
expectation is linear:
\begin{equation}\label{eq:linear_control}
  E[Z_{t}\mid \vec{V}_{t}]=\vec{W}_{t}'\vec{\gamma}.
\end{equation}
We assume implicitly that the conditional expectations in
\cref{eq:indep,eq:linear_control} are well-defined.

This setup accommodates several scenarios. On the one hand, including more
variables in $\vec{V}_{t}$ makes~\cref{eq:indep} less restrictive; on the other
hand, as we will see shortly, it requires stronger conditions to ensure
non-negative weights in the instrumental variables estimand. As a leading case,
we may put $\vec{V}_{t}=(\vec{W}_{t}, \tilde{V}_{t})$. Then
\cref{eq:indep,eq:linear_control} hold if the instrument is conditionally
randomly assigned (and $E[Z_t \mid \vec{W}_t]$ is linear). In this case, with
a binary shock $X_{t}$, the difference $\Psi_h(1,\vec{v})-\Psi_h(0,\vec{v})$ corresponds to
the marginal treatment effect of \citet{HeVy99,HeVy05}. Second, under selection
on observables, we can simply put $\vec{V}_{t}=\vec{W}_{t}$. In this case,
\cref{eq:indep} states that $Z_{t}$ is a valid proxy for $X_{t}$, analogously to \cref*{eqn:proxy}; moreover, $\Psi_{h}$ equals the conditional average structural function. Third, we may put
$\mathbf{V}_{t}=(\vec{W}_{t}, \vec{U}_{h, t+h})$, in which case~\cref{eq:indep} holds
trivially, and $\Psi_{h}=\psi_{h}$. See \citet[Section 6]{strlb17} for examples of other choices for
$\vec{V}_{t}$ (which is denoted by $\vec{U}$ in their notation) when $X_{t}$ is
assumed to be binary.

Under equations~(\ref*{eqn:causal}), \eqref{eq:indep}, and
\eqref{eq:linear_control}, it follows by the Frisch-Waugh theorem and iterated
expectations that the coefficient on $Z_{t}$ in a linear ``reduced-form''
regression of $Y_{t+h}$ onto $Z_{t}$ and a vector of controls $\vec{W}_{t}$ is
given by
\begin{equation} \label{eqn:iv_estimand}
  \tilde{\beta}_{h}=\frac{E[(Z_{t}-\vec{W}_{t}'\vec{\gamma})\Psi_{h}(X_{t},
    \vec{V}_{t})]}{E[(Z_{t}-\vec{W}_{t}'\vec{\gamma})^{2}]}.
\end{equation}
As in \Cref*{sec:ident-with-contr-1}, we assume that the support of $X_{t}$
conditional on $\vec{V}_{t}$ is contained in an interval $I_{\vec{V}_{t}}$. If
there are gaps in the support of $X_{t}$, such as when $X_{t}$ is discrete, we
assume that we can extend $\Psi_{h}(\cdot, \vec{V}_{t})$ to $I_{\vec{V}_{t}}$
such that the extension is locally absolutely continuous.
Applying~\Cref*{theorem:rr-covariates} with $\vec{V}_{t}$ playing the role of
the covariates $\vec{W}$ then yields the following result:

\begin{prop}\label{theorem:iv_endogeneity}
  Suppose \cref{eq:linear_control} holds and that
  $E[(Z_{t}-\vec{W}_{t}'\vec{\gamma})^{2}]$ is positive and finite. Let
  $\alpha(\vec{V}_{t}, X_{t})=E[Z_{t}-\vec{W}_{t}'\vec{\gamma}\mid
  \vec{V}_{t}, X_{t}]$. Suppose also that conditional on $\vec{V}_{t}$, the
  following holds almost surely: (i) the support of $X_{t}$ is contained in a
  (possibly unbounded) interval $I_{\vec{V}_{t}}\subseteq \mathbb{R}$; and (ii)
  $\Psi_{h}(\cdot, \vec{V}_{t})$ is locally absolutely continuous on
  $I_{\vec{V}_{t}}$. Suppose also that (iii) there exists a function
  $x_{0}(\vec{V}_{t})\in I_{\vec{V}_{t}}$ such that
  $E[\abs{\alpha(\vec{V}_{t}, X_{t})\int_{x_{0(\vec{V}_{t})}}^{X_{t}}
    \abs{\Psi_{h}'(x, \vec{V}_{t})}\, dx}]<\infty$; and that (iv)
  $E[\abs{\alpha(\vec{V}_{t}, X_{t})}(1+\abs{\Psi_{h}(X_{t},
    \vec{V}_{t})})]<\infty$. Then the estimand~\eqref{eqn:iv_estimand} satisfies
   \begin{equation*}
     \tilde{\beta}_{h}= E\left[\int\omega(x, \vec{V}_{t})\Psi_{h}'(x, \vec{V}_{t})\, dx\right],
  \end{equation*}
  where
  $\omega(x, \vec{v}) \equiv E[\1{X_{t}\geq x}
  (Z_{t}-\vec{W}_{t}'\vec{\gamma})\mid \vec{V}_{t}=\vec{v}]/\var(Z_{t}-\vec{W}_{t}'\vec{\gamma})$, and $\Psi_h'(x, \vec{v})$ denotes the partial derivative with respect to $x$.
\end{prop}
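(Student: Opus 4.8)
The plan is to recognize that \eqref{eqn:iv_estimand}, after rescaling, is exactly the estimand covered by \Cref*{theorem:rr-covariates} with $\vec V_t$ taken as the covariate vector, and then to check that hypotheses (i)--(iv) are the transcription of that theorem's conditions. First I would write $\tilde Z_t \equiv Z_t - \vec W_t'\vec\gamma$ for the residualized instrument. Since $\vec V_t$ contains $\vec W_t$, the term $\vec W_t'\vec\gamma$ is $\sigma(\vec V_t)$-measurable, so \cref{eq:linear_control} gives $E[\tilde Z_t\mid\vec V_t]=E[Z_t\mid\vec V_t]-\vec W_t'\vec\gamma=0$; in particular $E[\tilde Z_t]=0$, so $E[\tilde Z_t^2]=\var(\tilde Z_t)$ and dividing by $E[(Z_t-\vec W_t'\vec\gamma)^2]$ coincides with dividing by $\var(Z_t-\vec W_t'\vec\gamma)$ in the definition of $\omega$. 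Thus $\tilde\beta_h=E[\tilde Z_t\,\Psi_h(X_t,\vec V_t)]/\var(\tilde Z_t)$ is (up to an immaterial intercept) the slope from projecting $\Psi_h(X_t,\vec V_t)$ onto $\tilde Z_t$, with $\tilde Z_t$ mean-independent of the covariates $\vec V_t$ --- precisely the setting of \Cref*{theorem:rr-covariates}. The formal proof is then short: verify (i)--(iv) against its hypotheses and invoke it.

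For completeness I would also spell out the mechanics. Conditioning on $(\vec V_t,X_t)$ and using iterated expectations, the numerator of \eqref{eqn:iv_estimand} equals $E[\alpha(\vec V_t,X_t)\Psi_h(X_t,\vec V_t)]$. By hypothesis (ii), for $\vec V_t$-almost every realization the map $x\mapsto\Psi_h(x,\vec V_t)$ is locally absolutely continuous on the interval $I_{\vec V_t}$ of hypothesis (i), so the fundamental theorem of calculus together with a signed-indicator rewriting of the path integral (using that both endpoints lie in $I_{\vec V_t}$) gives the key identity
\begin{equation*}
  \Psi_h(X_t,\vec V_t)=\Psi_h(x_0(\vec V_t),\vec V_t)+\int_{I_{\vec V_t}}\Psi_h'(x,\vec V_t)\bigl(\1{X_t\geq x}-\1{x_0(\vec V_t)\geq x}\bigr)\,dx.
\end{equation*}
Substituting, the numerator splits into a baseline term $E[\alpha(\vec V_t,X_t)\Psi_h(x_0(\vec V_t),\vec V_t)]$ and a path term. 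The baseline term vanishes because $\Psi_h(x_0(\vec V_t),\vec V_t)$ is $\sigma(\vec V_t)$-measurable and $E[\alpha(\vec V_t,X_t)\mid\vec V_t]=E[\tilde Z_t\mid\vec V_t]=0$. For the path term I would apply Fubini's theorem to interchange $E[\,\cdot\,]$ with $\int\,dx$, and then, for fixed $x$, use the tower property conditional on $\vec V_t$: this turns $E[\alpha(\vec V_t,X_t)\1{X_t\geq x}\mid\vec V_t]$ into $E[\tilde Z_t\1{X_t\geq x}\mid\vec V_t]=\var(\tilde Z_t)\,\omega(x,\vec V_t)$, while the $\1{x_0(\vec V_t)\geq x}$ piece contributes $\1{x_0(\vec V_t)\geq x}\,E[\tilde Z_t\mid\vec V_t]=0$. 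Hence the numerator equals $\var(\tilde Z_t)\,E[\int\omega(x,\vec V_t)\Psi_h'(x,\vec V_t)\,dx]$, and dividing through delivers the claim.

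The delicate steps are the interchange of integration and the integrability bookkeeping, and this is precisely what conditions (iii) and (iv) are for: together they guarantee that $\alpha(\vec V_t,X_t)\Psi_h(X_t,\vec V_t)$ (so that $\tilde\beta_h$ is well defined), the baseline term, and the Fubini-rearranged path term are all absolutely integrable, while (i)--(ii) are what make the fundamental-theorem-of-calculus step and the indicator rewriting valid $\vec V_t$-almost surely --- including when $I_{\vec V_t}$ is unbounded and when $X_t$ has gaps in its conditional support, the latter handled by the assumed locally absolutely continuous extension of $\Psi_h(\cdot,\vec V_t)$ to $I_{\vec V_t}$. Since \Cref*{theorem:rr-covariates} already packages exactly this argument, I expect the only real work in the write-up to be matching its hypotheses to (i)--(iv) and checking the orthogonality $E[\tilde Z_t\mid\vec V_t]=0$ that licenses putting $\vec V_t$ (rather than the smaller set $\vec W_t$) in the role of the covariates.
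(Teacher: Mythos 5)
Your proposal is correct and follows essentially the same route as the paper: the paper proves this proposition simply by invoking \Cref*{theorem:rr-covariates} with $\vec{V}_{t}$ in the role of the covariates and $\alpha(\vec{V}_t,X_t)=E[Z_t-\vec{W}_t'\vec{\gamma}\mid\vec{V}_t,X_t]$, and the decomposition you spell out (baseline term killed by $E[\alpha\mid\vec{V}_t]=0$, Fubini plus the fundamental theorem of calculus on the path term) is exactly the internal argument of that lemma's proof. Your observation that $E[\tilde{Z}_t]=0$ reconciles the denominator $E[\tilde{Z}_t^2]$ in \eqref{eqn:iv_estimand} with $\var(\tilde{Z}_t)$ in the definition of $\omega$ is a worthwhile detail the paper leaves implicit.
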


\Cref{theorem:iv_endogeneity} shows that the reduced-form regression of $Y_{t}$
onto $Z_{t}$ identifies a weighted average of derivatives of the marginal
treatment response function.\footnote{An analogous application of
  \Cref*{theorem:rr-covariates} to a linear ``first-stage regression'' of
  $X_{t}$ onto $Z_{t}$ and a vector of controls $\vec{W}_{t}$ shows that it
  identifies the integral of the weights, $E[\int\omega(x, \vec{V}_{t})dx]$, so
  that the weights in the associated instrumental variables regression integrate
  to one.} A sufficient condition ensuring non-negative weights
$\omega(x, \vec{v})$ is the stochastic monotonicity condition that
$E[Z_{t}\mid X_{t}=x, \vec{V}_{t}]$ is almost surely monotone increasing (or
decreasing) in $x$.

Applying the result with $\vec{V}_{t}=(\vec{W}_{t}, \tilde{V}_{t})$ generalizes
Theorem 1 of \citet{AnGrIm00} in several ways: we don't require
differentiability of the potential outcome function $\psi_{h}$, only of the
marginal treatment response function; $X_{t}$ is not required to be
continuous---it may be discrete or mixed; $Z_{t}$ is not restricted to be
binary; we impose no structure on the first-stage equation; and finally, we
impose only very weak moment conditions. In this case, the stochastic
monotonicity assumption is equivalent to the first-stage monotonicity condition
that $\xi(z, \vec{W}_{t}, \tilde{V}_{t})$ is increasing in $z$: this corresponds
to Assumption 4 in \citet{AnGrIm00} if $z$ is binary.
%
%
%
%
Under this condition, $\tilde{\beta}_{h}$ can be interpreted as identifying a
weighted average of marginal effects for compliers. Let
$\mathcal{C}_{\vec{w}}$ collect all $\tilde{v}$ in the support of
$\tilde{V}_{t}$ such that $\xi(\cdot, \vec{w}, \tilde{v})$ is not constant.
Following \citet{AnImRu96}, we refer to the set $\mathcal{C}_{\vec{w}}$ as the set of
\emph{compliers}, since if $\tilde{V}_{t}\in \mathcal{C}_{\vec{W}_{t}}$, the shock $X_{t}$
complies with the instrument assignment in the sense that it increases with
$Z_{t}$. If $\tilde{V}_{t}\not\in\mathcal{C}_{\vec{W}_{t}}$, variation in the instrument $Z_{t}$ has
no impact on $X_{t}$, and hence $\omega(x, \vec{W}_{t}, \tilde{V}_{t})=0$ for all
$x$. Thus, the estimand $\tilde{\beta}_{h}$ only places positive weight on the
marginal effect $\Psi_{h}'(x, \vec{W}_{t}, \tilde{V}_{t})$ for compliers.

As discussed in \citet{strlb17} in the context with a binary treatment, the
first-stage monotonicity assumption may be too strong in some contexts. In such
scenarios, other choices of $\vec{V}_{t}$ may be preferable, such as setting
$\vec{V}_{t}=(\vec{W}_{t}, \vec{U}_{h,t+h})$. For this choice of $\vec{V}_{t}$,
\Cref{theorem:iv_endogeneity} generalizes Proposition 1 in \citet{bh24} by
allowing $X_{t}$ to have full support, and dropping the requirement that $X_{t}$
be continuous.

If $X_{t}$ is exogenous, we may set $\vec{V}_{t}=\vec{W}_{t}$, which both
weakens the condition ensuring non-negative weights and broadens the
interpretation of the estimand. In particular, now
$\Psi_{h}'(x, \vec{v})=\partial E[\psi_h(x, \vec{U}_{h, t+h})\mid
\vec{W}_{t}=\vec{w}]/\partial x$ gives the overall marginal effect, not just the
effect for compliers. Also, now the stochastic monotonicity condition requires
only that $E[Z_{t}\mid X_{t}=x, \vec{W}_{t}]$ is increasing in $x$. Without
covariates, this reduces to the condition that $\zeta(x)=E[Z_t \mid X_t=x]$ is
monotone, as discussed in \Cref*{sec:proxy}. This is clearly weaker than the
\citet{AnGrIm00} condition that $E[Z_{t}\mid X_{t}=x, \vec{W}_{t}, \tilde{V}_{t}]$
is increasing in $x$: we only require this to be true \emph{on average} over
$\tilde{V}_{t}$ rather than for almost all realizations of $\tilde{V}_{t}$. This
condition holds for many measurement error models for $Z_{t}$, even though the
stronger first-stage monotonicity condition may be violated.

\subsection{Identification via heteroskedasticity: linear case}\label{app:hetero_identif_linear}

Here we derive the linear identification result (\ref*{eqn:hetero_identif_linear}), following \citet{Rigobon2004} and \citet{Lewbel2012}. Note first that
\begin{equation*}
  \begin{split}
    E[Z \mid \vec{U}]
    &=
      E[(\theta_1 X + \gamma_1(\vec{U}))(D-E[D]) \mid \vec{U}] \\
    &= \theta_1 E[X(D-E[D]) \mid \vec{U}] + \gamma_1(\vec{U})E[D-E(D) \mid \vec{U}] \\
    &= \theta_1 \cov(X, D) + \gamma_1(\vec{U})E[D-E(D)] \\
    &= 0.
  \end{split}
\end{equation*}
Hence,
\begin{equation*}
  \cov(\vec{Y}, Z) = \vec{\theta}\cov(X, Z) + \cov(\vec{\gamma}(\vec{U}), Z) = \vec{\theta}\cov(X, Z),
\end{equation*}
and the claim (\ref*{eqn:hetero_identif_linear}) follows, provided that $\cov(X, Z) \neq 0$. The latter holds if $\theta_1 \neq 0$ and $\cov(X^2,D) \neq 0$, since
\begin{align*}
  \pushQED{\qed}
  \cov(X, Z) &= E[X(\theta_1 X + \gamma_1(\vec{U}))(D-E[D])] \\
            &= \theta_1 \cov(X^2,D) + \cov(X, D)E[\gamma_1(\vec{U})] \\
            &= \theta_1 \cov(X^2,D). \qedhere
              \popQED
\end{align*}

\subsection{Details for \texorpdfstring{\Cref*{exm:nongauss_counter2}}{Example~\ref*{exm:nongauss_counter2}}}\label{app:nongauss_counter2}
Let $\tilde{U}_1$ and $\tilde{U}_2$ be independent uniforms on $[0,1]$. By the Box-Muller transform, the two variables
\begin{equation*}
  \tilde{Y}_1 \equiv \sqrt{-2\log \tilde{U}_1}\cos(2\pi \tilde{U}_2), \quad \tilde{Y}_2 \equiv \sqrt{-2\log \tilde{U}_1}\sin(2\pi \tilde{U}_2),
\end{equation*}
have a bivariate standard normal distribution.

Define $X \equiv \log(-2\log \tilde{U}_1)$ and
$U \equiv \log\cos^2(2\pi\tilde{U}_2)$, so that $X$ and $U$ are independent and
non-Gaussian. By construction, the following two variables are independent:
\begin{equation*}
  Y_1 \equiv \log \tilde{Y}_1^2 = X + U, \quad Y_2 \equiv \log\tilde{Y}_2^2 = X + \gamma(U),
\end{equation*}
where
\begin{equation*}
  \gamma(u) \equiv \log\left(1-\exp(u)\right), \quad u<0,
\end{equation*}
and we have used that
$\exp(U) = \cos^2(2\pi\tilde{U}_2) = 1-\sin^2(2\pi\tilde{U}_2)$. Note that in
this example, the shocks $X$ and $U$ do not have mean zero as commonly assumed
in the literature, but this is easily rectified by just subtracting off their
means in the calculations.\qed%

\subsection{Additional empirical estimates of causal weights}\label{app:weight_empir}

Complementing the results for government spending shocks in \Cref*{fig:gov} (\Cref*{sec:obsshock}), \Cref{fig:tax,fig:tech,fig:mon} show estimated causal weight functions for several identified tax shocks, technology shocks, and monetary policy shocks. The data is obtained from the replication files for \citet{Ramey2016}, as discussed in \Cref*{sec:obsshock}. While many of the shocks yield approximately symmetric weight functions, the \citet{Romer2010} and \citet{Mertens2014} tax shocks are both skewed towards tax cuts, while the \citet{Christiano1999} and \citet{Gertler2015} monetary shocks are skewed towards interest rate cuts. As discussed in \Cref*{sec:obsshock}, this is important to keep in mind when using impulse response estimates to discipline structural models that feature asymmetries.

\begin{figure}[p]
\centering
\textsc{Empirical weight functions: tax shocks} \\
\includegraphics[width=\linewidth,clip=true,trim=0 1em 0 0]{./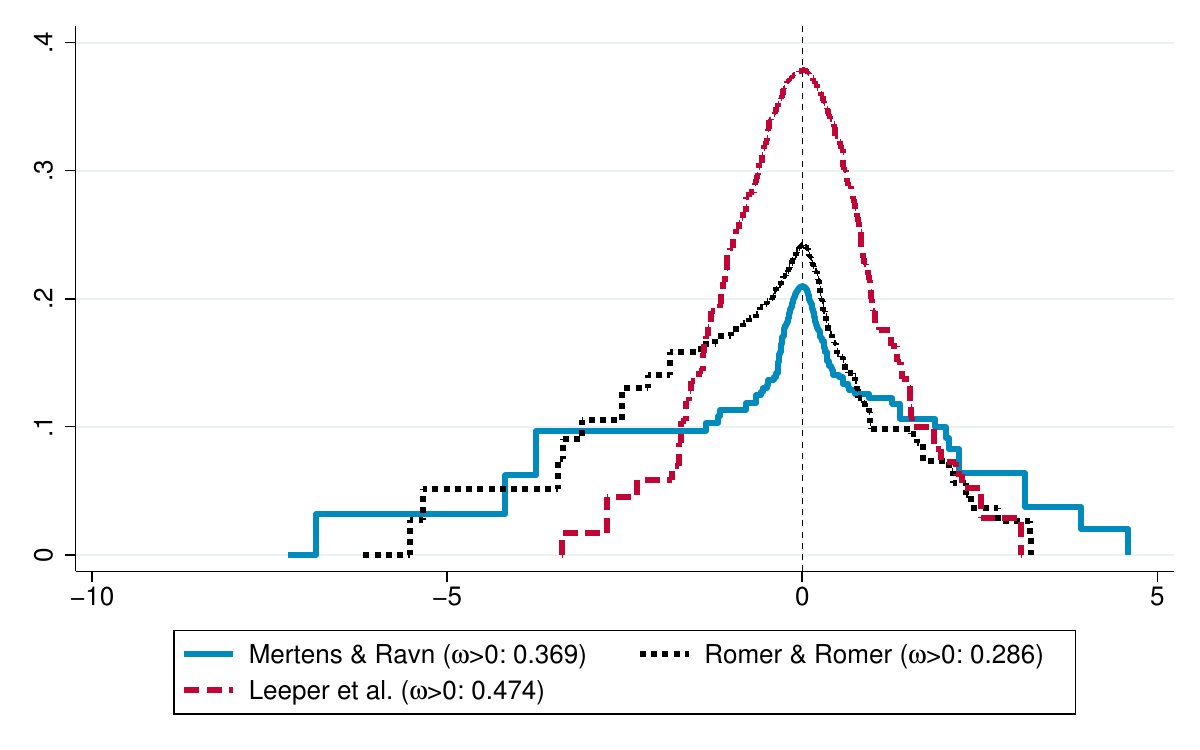}
\caption{Estimated causal weight functions $\omega_X$ for tax shocks obtained from the replication files for \citet{Ramey2016}, quarterly data. Horizontal axis in units of standard deviations. ``$\omega>0$'': total weight $\int_0^\infty \omega_X(x)\, dx$ on positive shocks. Papers referenced: \citet{Mertens2014}, \citet{Romer2010}, \citet{Leeper2012}.}\label{fig:tax}
\end{figure}

\begin{figure}[p]
\centering
\textsc{Empirical weight functions: technology shocks} \\
\includegraphics[width=\linewidth,clip=true,trim=0 1em 0 0]{./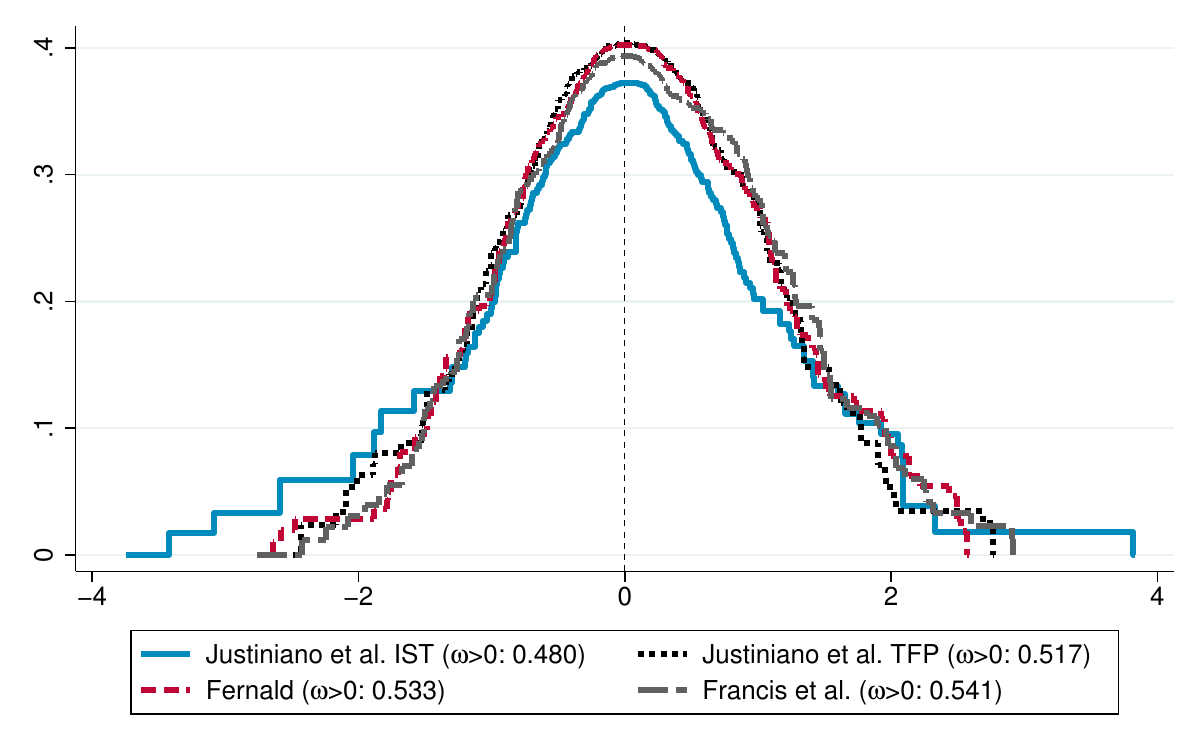}
\caption{Estimated causal weight functions $\omega_X$ for technology shocks obtained from the replication files for \citet{Ramey2016}, quarterly data. Horizontal axis in units of standard deviations. ``TFP'' = total factor productivity. ``IST'' = investment-specific technology. ``$\omega>0$'': total weight $\int_0^\infty \omega_X(x)\, dx$ on positive shocks. Papers referenced: \citet{Justiniano2011}, \citet{Fernald2014}, \citet{Francis2014}.}\label{fig:tech}
\end{figure}

\begin{figure}[p]
\centering
\textsc{Empirical weight functions: monetary policy shocks} \\
\includegraphics[width=\linewidth,clip=true,trim=0 1em 0 0]{./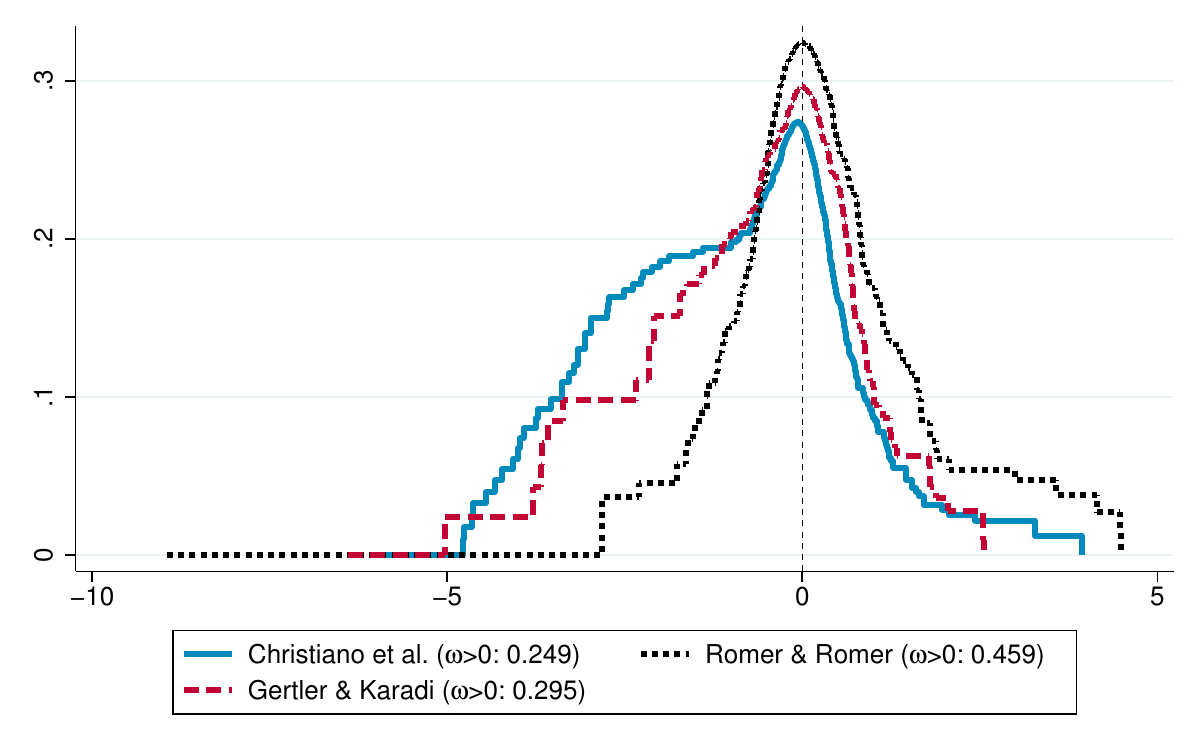}
\caption{Estimated causal weight functions $\omega_X$ for monetary policy shocks obtained from the replication files for \citet{Ramey2016}, quarterly data. Horizontal axis in units of standard deviations. ``$\omega>0$'': total weight $\int_0^\infty \omega_X(x)\, dx$ on positive shocks. Papers referenced: \citet{Christiano1999}, \citet{Romer2010}, \citet{Gertler2015}.}\label{fig:mon}
\end{figure}

\clearpage

\section{Proofs}\label{app:proofs}

\subsection{Auxiliary lemma}

\begin{lem}\label{lemma:finite_integral_omega}
  Suppose that conditions (i)--(iii) of
    \Cref*{theorem:rr} hold. Suppose additionally that for some $\underline{x}, \overline{x}\in I$, $\underline{x}\leq \overline{x}$,
  it holds that either (a) $\alpha(x)$ only changes sign for $x \in [\underline{x}, \overline{x}]$ and $\int_I |\omega(x)g'(x)|\, dx<\infty$, or (b) $g(x)$ is monotone for $x\leq \underline{x}$ and for $x\geq \overline{x}$.  Then condition (iv) of \Cref*{theorem:rr} holds for
  any $x_{0}\in[\underline{x}, \overline{x}]$.
\end{lem}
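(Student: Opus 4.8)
The plan is to verify condition~(iv) of \Cref*{theorem:rr}, namely the integrability condition
\[
  E\!\left[\abs{\alpha(X_t)}\int_{\min\{x_{0},X_t\}}^{\max\{x_{0},X_t\}}\abs{g'(x)}\,dx\right]<\infty
\]
needed to justify the Fubini interchange in the proof of that theorem, for an \emph{arbitrary} $x_{0}\in[\underline{x},\overline{x}]$. Conditions (i)--(iii) will be used throughout: the support of $X_t$ lies in the interval $I$; $g$ is locally absolutely continuous on $I$, so that $g(b)-g(a)=\int_{a}^{b}g'$ and $\int_{a}^{b}\abs{g'}<\infty$ on every compact $[a,b]\subseteq I$; and the moment condition in~(iii) yields $E\abs{\alpha(X_t)}<\infty$ and $E\abs{\alpha(X_t)g(X_t)}<\infty$. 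I will also use two features of the setup of \Cref*{theorem:rr}: that $\omega(x)=E[\alpha(X_t)\1{X_t\geq x}]/v$ for a positive normalizing constant $v$, and that $E[\alpha(X_t)]=0$.

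The first step is to split the inner integral around the points $\underline{x}$ and $\overline{x}$. Since $x_{0}\in[\underline{x},\overline{x}]$ and $X_t\in I$, the interval $[\min\{x_{0},X_t\},\max\{x_{0},X_t\}]$ is contained in $[\min\{\underline{x},X_t\},\underline{x}]\cup[\underline{x},\overline{x}]\cup[\overline{x},\max\{\overline{x},X_t\}]$, so nonnegativity of $\abs{g'}$ gives
\[
  \int_{\min\{x_{0},X_t\}}^{\max\{x_{0},X_t\}}\abs{g'(x)}\,dx\;\leq\;c_{-}(X_t)+c_{0}+c_{+}(X_t),
\]
where $c_{0}=\int_{\underline{x}}^{\overline{x}}\abs{g'(x)}\,dx<\infty$, $c_{+}(X_t)=\int_{\overline{x}}^{\max\{\overline{x},X_t\}}\abs{g'(x)}\,dx$, and $c_{-}(X_t)=\int_{\min\{\underline{x},X_t\}}^{\underline{x}}\abs{g'(x)}\,dx$. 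The middle contribution is immediate, $E[\abs{\alpha(X_t)}c_{0}]=c_{0}E\abs{\alpha(X_t)}<\infty$, so it remains to bound the two tail contributions $E[\abs{\alpha(X_t)}c_{\pm}(X_t)]$; by symmetry I focus on $c_{+}$.

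Under hypothesis~(b), $g$ is monotone on $I\cap[\overline{x},\infty)$, so $\abs{g'}$ has a single sign there and, by local absolute continuity, $c_{+}(X_t)=\abs{g(\max\{\overline{x},X_t\})-g(\overline{x})}\leq\abs{g(X_t)}+\abs{g(\overline{x})}$; hence $E[\abs{\alpha(X_t)}c_{+}(X_t)]\leq E\abs{\alpha(X_t)g(X_t)}+\abs{g(\overline{x})}\,E\abs{\alpha(X_t)}<\infty$. Under hypothesis~(a), I instead write $c_{+}(X_t)=\int_{I\cap(\overline{x},\infty)}\1{X_t\geq x}\abs{g'(x)}\,dx$ and apply Tonelli's theorem to obtain $E[\abs{\alpha(X_t)}c_{+}(X_t)]=\int_{I\cap(\overline{x},\infty)}\abs{g'(x)}\,E[\abs{\alpha(X_t)}\1{X_t\geq x}]\,dx$. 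For each $x>\overline{x}$ the event $\{X_t\geq x\}$ forces $X_t>\overline{x}$, a set on which $\alpha$ has constant sign by~(a); therefore $E[\abs{\alpha(X_t)}\1{X_t\geq x}]=\abs{E[\alpha(X_t)\1{X_t\geq x}]}=v\abs{\omega(x)}$, and so $E[\abs{\alpha(X_t)}c_{+}(X_t)]=v\int_{I\cap(\overline{x},\infty)}\abs{\omega(x)g'(x)}\,dx\leq v\int_{I}\abs{\omega(x)g'(x)}\,dx<\infty$ by the integrability hypothesis in~(a). The treatment of $c_{-}$ is symmetric; in case~(a) it additionally uses $E[\alpha(X_t)]=0$ to write $E[\alpha(X_t)\1{X_t\leq x}]=-E[\alpha(X_t)\1{X_t\geq x}]$, an identity valid for Lebesgue-almost every $x$ (namely, all non-atoms of $X_t$), which suffices because we integrate against $dx$.

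The only part that requires care is this tail analysis. In case~(b) the work is just the fundamental theorem of calculus on compact subintervals, with the monotonicity being precisely what converts the a priori uncontrolled total variation $\int_{\overline{x}}^{X_t}\abs{g'}$ into $\abs{g(X_t)-g(\overline{x})}$, which the moment condition in~(iii) dominates. In case~(a) the delicate point is identifying the post-Tonelli integrand $E[\abs{\alpha(X_t)}\1{X_t\geq x}]$ with $v\abs{\omega(x)}$; this relies on the constant-sign hypothesis on $\alpha$ in the tails, on the mean-zero property of the residual generating $\omega$ (for the lower tail), and on the harmless fact that $\omega$ agrees with its one-sided limits off the countable atom set of $X_t$.
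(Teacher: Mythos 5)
Your proof is correct and follows essentially the same route as the paper's: the same three-way split of the inner integral at $\underline{x}$ and $\overline{x}$, the same Tonelli-plus-constant-sign argument identifying the upper-tail term with $\int\abs{\omega(x)g'(x)}\,dx$ under hypothesis (a), and the same use of monotonicity to collapse $\int_{\overline{x}}^{X}\abs{g'}$ to $\abs{g(X)-g(\overline{x})}$ under hypothesis (b). The only differences are cosmetic (explicit tracking of the normalizing constant and of the atom set in the lower-tail case, both of which the paper handles implicitly by symmetry).
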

\begin{proof}
  Bound
  \begin{align*}
  E\left[\left|\alpha(X)\int_{x_0}^X|g'(x)|\, dx\right|\right] &\leq E[\abs{\alpha(X)}]\int_{\underline{x}}^{\overline{x}}|g'(x)|\, dx + E\left[\1{X \geq \overline{x}}\abs{\alpha(X)}\int_{\overline{x}}^X|g'(x)|\, dx\right] \\
  &\quad + E\left[\1{X \leq \underline{x}}\abs{\alpha(X)}\int_X^{\underline{x}}|g'(x)|\, dx\right].
  \end{align*}
  The first term on the right-hand side is finite since $g$ is absolutely
  continuous on $[\underline{x}, \overline{x}]$. Now consider the second term on
  the right-hand side; the third term can be handled analogously. Under
  condition (a), $\alpha(x)$ has the same sign for all $x \geq \overline{x}$, so
  the second term equals
  \[\int_I \1{x \geq \overline{x}} \left|E[\1{X \geq x}\alpha(X)]\right||g'(x)|\, dx \leq \int_I |\omega(x)| |g'(x)|\, dx < \infty.\]
  Under condition (b), since $g(x)$ is monotone for $x \geq \overline{x}$,
  $\int_{\overline{x}}^X\abs{g'(x)}\, dx=\abs{\int_{\overline{x}}^{X}g'(x)\, dx}$, so that the second term on the right-hand side in the first display equals
  \begin{align*}
  E\left[\1{X \geq \overline{x}}\abs{\alpha(X)}\left|\int_{\overline{x}}^X g'(x)\, dx\right|\right] &= E\left[\1{X \geq \overline{x}}\abs{\alpha(X)}\abs{g(X)-g(\overline{x})}\right] \\
  &\leq E[\abs{\alpha(X)g(X)}]+\abs{g(\overline{x})}E[\abs{\alpha(X)}] < \infty. \qedhere
  \end{align*}
\end{proof}

\subsection{Proof of \texorpdfstring{\Cref*{thm:obsshock}}{Proposition~\ref*{thm:obsshock}}}

This is a special case of \cref*{thm:proxy} with $Z=\zeta(X)=X$.
\Cref{thm:weight_integral} implies that the weights integrate to 1.
\qed%

\subsection{Proof of \texorpdfstring{\Cref*{thm:quad_reg}}{Proposition~\ref*{thm:quad_reg}}}

Since $g_{h}'(x)$ is locally absolutely continuous and
$E[\abs{g_{h}''(X_{t})}]<\infty$, by Stein's lemma \citep[Lemma 1]{stein81},
\begin{equation*}
  E[g_{h}''(X_{t})]=E[X_{t}g_{h}'(X_{t})].
\end{equation*}
Since $E[\abs{g_{h}(X_{t})}]<\infty$, another application of Stein's lemma
yields $E[X_{t}g_{h}'(X_{t})+g_{h}(X_{t})]=E[X_{t}^{2}g_{h}(X_{t})]$. Hence,
$\cov(g_{h}(X_{t}), X_{t}^{2})=E[X_{t}g_{h}'(X_{t})]=E[g_{h}''(X_{t})]$. A third
application of Stein's lemma yields $E[g_{h}'(X_{t})]=\cov(X_{t}, g_{h}(X_{t}))$.
The result then follows from the
definitions~(\ref*{eqn:quad_reg_deriv})--(\ref*{eqn:quad_reg_coef}).
\qed%

\subsection{Proof of \texorpdfstring{\Cref*{thm:proxy}}{Proposition~\ref*{thm:proxy}}}

The representation of the estimand follows directly from
\Cref*{theorem:rr} and \Cref{lemma:finite_integral_omega} with
$\alpha(X_{t}) = \zeta(X_{t})-E[Z_{t}]$. Claim (i) for the weights follows from
a simple calculation. Claim (ii) follows from
$\cov(\1{X_t \geq x}, \zeta(X_{t})) = \var(\1{X_t \geq x})\lbrace E[\zeta(X_{t})
\mid X_t \geq x] - E[\zeta(X_{t}) \mid X_t < x] \rbrace$. For the last statement
of the proposition, observe that for $x_{U}>x_{L}$,
$\tilde{\omega}_{Z}(x_{L})-\tilde{\omega}_{Z}(x_{U})$ is proportional to
$E[\1{x_{L}<X_{t}< x_{U}}(\zeta(X_{t})-E[Z_{t}])]$, which is positive if
$x_{0}<x_{L}<x_{U}$ and negative if $x_{L}<x_{U}<x_{0}$.
\qed%

\subsection{Proof of \texorpdfstring{\Cref*{thm:hetero_identif}}{Proposition~\ref*{thm:hetero_identif}}}
Let $\tau$ be a Rademacher random variable independent of $(D, \Ra, \vec{U})$,
i.e., $P(\tau=1 \mid D, \Ra, \vec{U})=P(\tau=-1 \mid D, \Ra, \vec{U})=1/2$. Since the
distribution of $\Ra$ is symmetric around zero, $\Ra$ has the same distribution as
$|\Ra| \times \tau$, and thus $(X, \vec{U})$ has the same distribution as
$(|X|\tau, \vec{U})$. Let $\tilde{U}$ be uniform on $[0,1]$ independently of
$(D, \Ra)$, and let $\phi_\tau \colon \mathbb{R} \to \mathbb{R}$ and
$\vec{\phi}_{\vec{U}} \colon \mathbb{R} \to \mathbb{R}^{m-1}$ be measurable
functions such that $(\tau, \vec{U})$ has the same distribution as
$(\phi_\tau(\tilde{U}), \vec{\phi}_{\vec{U}}(\tilde{U}))$ (see the discussion
after \cref*{thm:factor} on the construction of such functions). Then it follows
that $(X, \vec{U})$ has the same distribution as
$(|X|\phi_\tau(\tilde{U}), \vec{\phi}_{\vec{U}}(\tilde{U}))$, and the conclusion
of the proposition obtains by defining
$\tilde{\vec{\psi}}(x, \tilde{u}) \equiv \vec{\psi}(|x|\phi_\tau(\tilde{u}),
\vec{\phi}_{\vec{U}}(\tilde{u}))$.\qed%

\subsection{Proof of \texorpdfstring{\Cref*{thm:hetero_iv}}{Proposition~\ref*{thm:hetero_iv}}}
Since $\vec{\gamma}(\vec{U})$ is independent of $(X, D)$ with mean zero,
\begin{align*}
\cov(\vec{Y}, Z \mid X) &= \cov(\vec{\gamma}(\vec{U}), (\Psi_1(X)+\gamma_1(\vec{U}))(D-E[D]) \mid X) \\
&= \cov(\vec{\gamma}(\vec{U}), \gamma_1(\vec{U})) \lbrace E[D\mid X]-E[D]\rbrace.
\end{align*}
The law of total covariance therefore implies
\begin{align*}
\cov(\vec{Y}, Z) &= E[\cov(\vec{Y}, Z \mid X)] + \cov(E[\vec{Y} \mid X], E[Z \mid X]) \\
&= 0 + E[\vec{\Psi}(X) \lbrace E[Z \mid X] - E[Z] \rbrace].
\end{align*}
The result now follows from \Cref*{theorem:rr} and \Cref{lemma:finite_integral_omega}, with weights given by
\begin{align*}
\check{\omega}(x) &\equiv E[\1{X \geq x}\lbrace E[Z \mid X] - E[Z] \rbrace] \\
&= \cov(\1{X \geq x}, E[Z \mid X, D]) \\
&= \cov(\1{X \geq x}, \Psi_1(X)(D-E[D])),
\end{align*}
where the last equality follows from
\begin{equation*}
  \pushQED{\qed}
  E[Z \mid X, D] = E[Y_1 \mid X, D](D-E[D]) = \Psi_1(X)(D-E[D]). \qedhere
  \popQED
\end{equation*}

\subsection{Proof of \texorpdfstring{\Cref*{thm:factor}}{Proposition~\ref*{thm:factor}}}
Let $Q_j(\tau \mid \tilde{Y}_{j-1}, \tilde{Y}_{j-2}, \dotsc, \tilde{Y}_1)$
denote the $\tau$-th quantile of $\tilde{Y}_j$ conditional on
$\tilde{Y}_{j-1}, \tilde{Y}_{j-2}, \dotsc, \tilde{Y}_1$. Now construct an
$n$-dimensional vector $\vec{Y}^*=(Y_1^*, \dotsc, Y_n^*)$ as follows. First set
$Y_1^*\equiv \tilde{Y}_1$. Then for $j > 1$, let
$Y_j^* \equiv Q_j(\bar{U}_{j-1} \mid
\tilde{Y}_{j-1}=Y_{j-1}^*, \dotsc, \tilde{Y}_1=Y_1^*)$. Standard arguments yield
that $\vec{Y}^*$ has the same distribution as $\tilde{\vec{Y}}$. Consequently,
$\bar{\vec{Y}} \equiv \vec{\Upsilon}^{-1}(\vec{Y}^*)$ has the same distribution
as $\vec{Y}=\vec{\Upsilon}^{-1}(\tilde{\vec{Y}})$. The mapping from
$(\tilde{X}, \bar{U}_1,\dotsc, \bar{U}_{n-1})$ to $\vec{Y}^*$ is continuous by the
assumptions on $Q_j$, and so is the implied $\bar{\vec{\psi}}$ mapping by
continuity of $\vec{\Upsilon}^{-1}$.%
\qed%

\subsection{Proof of \texorpdfstring{\Cref*{theorem:rr}}{Lemma~\ref*{theorem:rr}}}

This result follows directly from \Cref*{theorem:rr-covariates} by letting
$\vec{W}$ equal a constant.
\qed%

\subsection{Proof of \texorpdfstring{\Cref*{theorem:rr-covariates}}{Lemma~\ref*{theorem:rr-covariates}}}

Observe
\begin{align*}
    E\left[\int\omega(x, \vec{W})g'(x, \vec{W})\, dx\right]
    &      =          E\left[\int_{I_{\vec{W}}}
      E[\1{X\geq  x\geq x_{0}(\vec{W})}\alpha(X, \vec{W})\mid
      \vec{W}] g'(x,\vec{W})
      \,dx\right]\\
    &\quad      -
      E\left[\int_{I_{\vec{W}}}
      E[\1{X< x<x_{0}(\vec{W})}\alpha(X, \vec{W})\mid
      \vec{W}] g'(x, \vec{W})
      \, dx\right]\\
    &=E\left[\int_{I_{\vec{W}}}
      \1{X\geq x\geq x_{0}(\vec{W})}\alpha(X, \vec{W}) g'(x, \vec{W})
      \, dx\right]\\
    &\qquad -
      E\left[\int_{I_{\vec{W}}}
      \1{X< x< x_{0}(\vec{W})}\alpha(X, \vec{W}) g'(x, \vec{W})
      \, dx\right]\\
    &=E\left[\1{X\geq x_{0}(\vec{W})}
      \alpha(X, \vec{W})
      (g(X, \vec{W})-g(x_{0}(\vec{W}), \vec{W}))
      \right]\\
    &\qquad -
      E\left[\1{X< x_{0}(\vec{W})}
      \alpha(X, \vec{W}) (g(x_{0}(\vec{W}), \vec{W})
      -g(X, \vec{W}))
      \right]\\
    &=E\left[\alpha(X, \vec{W})(g(X, \vec{W})-
      g(x_{0}(\vec{W}), \vec{W}))\right] \\
    &=E\left[\alpha(X, \vec{W})g(X, \vec{W})\right],
  \end{align*}
where the first equality uses the fact that since
$E[{\alpha(X, \vec{W})}\mid \vec{W}]=0$ by condition (iii),
$\omega(x, \vec{w})= -E[\1{X< x}\alpha(X, \vec{w})\mid
\vec{W}=\vec{w}]$, the second equality uses Fubini's theorem, which is
justified since both integrals exist by condition (iv), the third equality
follows by the fundamental theorem of calculus and condition (ii), the fourth equality collects terms, and the last equality uses
iterated expectations, which is justified since
\begin{align*}
  &E\left[\abs{\alpha(X, \vec{W})g(x_{0}(\vec{W}), \vec{W})}\right]
  \\
  &\leq
  E\left[\abs{\alpha(X, \vec{W})g(X, \vec{W})}\right]
  +E\left[\abs{\alpha(X, \vec{W})(g(X, \vec{W})-
    g(x_{0}(\vec{W}), \vec{W}))}\right]\\
  &\leq
  E\left[\abs{\alpha(X, \vec{W})g(X, \vec{W})}\right]
  +E\left[\left|\alpha(X, \vec{W})\int_{x_{0}(\vec{W})}^{X}\abs{g'(x, \vec{W})}\, dx\right|\right]<\infty,
\end{align*}
by conditions (iv) and (v).
\qed%

\subsection{Proof of \texorpdfstring{\Cref*{theorem:ols_covariates}}{Proposition~\ref*{theorem:ols_covariates}}}

Observe that under either condition (a) or condition (b),
\begin{multline*}
  E[(X-\pi(\vec{W}))g(X, \vec{W})]\\
  =
  E[(X-\pi^{*}(\vec{W}))g(X, \vec{W})]+
  E\left[(\pi^{*}(\vec{W})-\pi(\vec{W}))\int \lambda({x}, \vec{W})
  g'(x, \vec{W})\, dx\right].
\end{multline*}
Applying \Cref*{theorem:rr-covariates} with
$\alpha(X, \vec{W})=X-\pi^{*}(\vec{W})$ and
$x_{0}(\vec{W})=\pi^{*}(\vec{W})$ yields
\begin{equation*}
  E[(X-\pi^{*}(\vec{W}))g(X, \vec{W})]
  =E\left[\int \omega^{*}(x, \vec{W})g'(x, \vec{W})\, dx\right].
\end{equation*}
Note that condition (iv) of \Cref*{theorem:rr-covariates} follows from a similar argument as in \Cref{lemma:finite_integral_omega} (conditional on $\vec{W}$).
%

Since $(X-\pi(\vec{W}))$ is orthogonal to $\pi(\vec{W})$ and to a constant function,
\begin{align*}
  \var(X-\pi(\vec{W}))&=E[(X-\pi^{*}(\vec{W}))X]
  +E[(\pi^{*}(\vec{W})-\pi(\vec{W}))\pi^{*}(\vec{W})]\\
  &=E[(X-\pi^{*}(\vec{W}))X]
  +E\left[(\pi^{*}(\vec{W})-\pi(\vec{W}))\int \lambda(x, \vec{W})\, dx\right],
\end{align*}
and it follows that the weights integrate to one. The last statement of the proposition can be shown using the same argument as in the proof of \Cref*{thm:proxy}.
\qed%

\end{appendices}

\clearpage%
\phantomsection%
\addcontentsline{toc}{section}{References}
\bibliography{ref}